\newcommand{\cmark}{\ding{51}}%
\newcommand{\xmark}{\ding{55}}%
\begin{document}

\title{R\'enyi Bounds on Information Combining}

 \author{%
   \IEEEauthorblockN{Christoph Hirche}\\
   \IEEEauthorblockA{QMATH, Department of Mathematical Sciences, University of Copenhagen\\
                     Universitetsparken 5, 2100 Copenhagen, Denmark.\\
                     Email: christoph.hirche@gmail.com}
 }

\maketitle

\begin{abstract}
Bounds on information combining are entropic inequalities that determine how the information, or entropy, of a set of random variables can change when they are combined in certain prescribed ways. Such bounds play an important role in information theory, particularly in coding and Shannon theory. The arguably most elementary kind of information combining is the addition of two binary random variables, i.e. a CNOT gate, and the resulting quantities are fundamental when investigating belief propagation and polar coding.

In this work we will generalize the concept to R\'enyi entropies. We give optimal bounds on the conditional R\'enyi entropy after combination, based on a certain convexity or concavity property and discuss when this property indeed holds. Since there is no generally agreed upon definition of the conditional R\'enyi entropy, we consider four different versions from the literature. 

Finally, we discuss the application of these bounds to the polarization of R\'enyi entropies under polar codes. 
\end{abstract}
%

\section{Introduction}
Many tasks in information theory are concerned with the evolution of random variables and their corresponding entropies under certain ``combining operations''. 
A particularly relevant example of such an operation is the addition of two independent random variables (with values in some group). In this case, the entropy can be easily computed since we know that the addition of two random variables has a probability distribution which corresponds to the convolution of the probability distributions of the individual random variables. 
The picture changes when we have random variables with \emph{side information}. Now we are interested in the entropy of the sum conditioned on all the available side information. Evaluating this is substantially more difficult. The field of \textit{bounds on information combining} is concerned with finding optimal entropic bounds on the resulting conditional entropy. 

A particularly important case is that of two binary random variables with side information of arbitrary finite dimension. This is the setting we will consider in this work as it has many applications e.g. in coding theory for polar codes~\cite{AT14} and Shannon theory~\cite{GKBook, RU08}. 
An optimal lower bound for the resulting Shannon entropy was given by Wyner and Ziv in \cite{WZ73}, the well known \textit{Mrs. Gerber's Lemma}. 
Following this result, additional approaches to the problem have been found which also led to an \emph{upper} bound on the conditional entropy of the combined random variables. One proof method and several additional applications can be found e.g. in \cite{RU08} along with the optimal upper bound.

In this work, we discuss the extension of these information combining bounds to R\'enyi entropies. While the unconditional R\'enyi entropy is clearly defined, one can find many different definitions of the conditional R\'enyi entropy in the literature. We focus on four  of the most commonly found conditional  R\'enyi entropies which we detail in Section~\ref{sec:Renyi}.  

In all cases, we find the optimal lower and upper bounds given a certain convexity or concavity property of an associated function. In many cases we will also give the ranges of $\alpha$ for which the desired property holds. Furthermore, we identify some remarkable cases where the bounds hold with equality. These results are briefly summarized in Figure~\ref{Fig:Results}. 
As in the Shannon entropy case, our bounds are optimal in the sense that they are attained by the binary symmetric channel and the binary erasure channel with equality. 

Finally, we briefly discuss the application of our results to channel polarization. We show that our results can easily be combined with a simplified proof strategy for polarization from~\cite{AT14} to show polarization of $H_\alpha^J$ under the polar coding transformation, therefore reproducing a result previously found in~\cite{zheng2019polarization}.

\section{Background and previous results}

In this section we will give a brief overview over the known results for the Shannon entropy case. We consider two pairs of random variables, $(X_1,Y_1)$ and $(X_2,Y_2)$, with $X_i$ binary and $Y_i$ being finite dimensional side information. The main quantities under investigation are 
\begin{equation}\label{mainterm}
H(X_1+X_2|Y_1Y_2)
\end{equation}
and 
\begin{equation}\label{eq:varo-minus}
H(X_2 | X_1+X_2,Y_1Y_2),
\end{equation}
which are related by
\begin{equation}\label{additiveentropies}
H(X_1+X_2|Y_1Y_2) + H(X_2 | X_1+X_2,Y_1Y_2) = H(X_1|Y_1) + H(X_2|Y_2).
\end{equation}

The optimal bounds are well known to be as follows~\cite{WZ73, RU08}:
\begin{align}
h\left(h^{-1}(H_1)\ast h^{-1}(H_2)\right) &\leq H(X_1+X_2|Y_1Y_2) \leq \log 2 - \frac{(\log 2 - H_1)(\log 2 - H_2)}{\log 2}, \label{minus-bounds}
\end{align}
where the lower bound is called the Mrs. Gerber's Lemma, and
\begin{align}
\frac{H_1 H_2}{\log 2} &\leq H(X_2 | X_1+X_2,Y_1Y_2) \leq H_1 + H_2 - h\left(h^{-1}(H_1)\ast h^{-1}(H_2)\right),  \label{plus-bounds}
\end{align}
with $H_1 = H(X_1|Y_1)$ and $H_2 = H(X_2|Y_2)$, $h(\cdot)$ the binary entropy, $h^{-1}(\cdot)$ its inverse and $a\ast b:=a(1-b) + (1-a)b$ the binary convolution.
In many situations, it is more intuitive to look at the special case where the two underlying entropies are equal $\left( H = H_1 = H_2\right)$. In that case we can state the following inequalities
\begin{align}
0.799\, \frac{H(\log2-H)}{\log2} + H &\leq h\left(h^{-1}(H_1)\ast h^{-1}(H_2)\right) \label{HHlowerbound}\\
&\leq H(X_1+X_2|Y_1Y_2) \leq  \frac{H(\log2-H)}{\log2} + H, \label{HHupperbound}
\end{align}
where the first is an additional convenient lower bound from~\cite{GX15} and the other two follow from Equation~\ref{minus-bounds}.

It follows from Equation~\eqref{additiveentropies} that it is sufficient to prove the inequalities for either Equation~\eqref{minus-bounds} or ~\eqref{plus-bounds}.  As we will see, most R\'enyi generalization of conditional entropy do not obey Equation~\eqref{additiveentropies}, due to the lack of an appropriate chain rule. In the following, we will focus on bounding the R\'enyi  generalizations of Equation~\eqref{mainterm} and leave bounds on those of Equation~\eqref{eq:varo-minus} for future research. 

Finally, we remark that, considering the random variables $X_i$ and $Y_i$ as input and output of a channel, respectively, it is also known for which channels equality is achieved in the above equations (see e.g. \cite{RU08}). For the lower bound in Equation \eqref{minus-bounds} this is the binary symmetric channel (BSC) and for the upper bound it is the binary erasure channel (BEC). For clarity, it will later be useful to refer to the generalized bounds as BSC and BEC bounds according to the channel that achieves equality.

 
 \section{Conditional R\'enyi entropies}\label{sec:Renyi}
 
The R\'enyi entropy $H_\alpha(X)$ is defined as
\begin{align}
H_\alpha(X) := \frac{1}{1-\alpha} \log \sum_x p(x)^\alpha.
\end{align} 
 In analogy to the case of the Shannon entropy, we have
 \begin{align}
 H_\alpha(X_1+X_2) = h_\alpha\left(h_\alpha^{-1}( H_\alpha(X_1))\ast h_\alpha^{-1}( H_\alpha(X_2))\right).
 \end{align} 
While this definition is well established, the picture becomes much less clear when we consider the conditional R\'enyi entropy. In fact, a multitude of different definitions exists~\cite{FB14}, each of which has found a number of applications. In the following, we will consider the four most commonly used definitions: 
\begin{align}
H_{\alpha}^{A}(X|Y) &:= \frac{\alpha}{1-\alpha}\log\left(\sum_y p(y) \left(\sum_x p(x|y)^\alpha\right)^{\frac{1}{\alpha}}\right),\\
H_{\alpha}^{H}(X|Y) &:= \frac{1}{1-\alpha}\log\left( \sum_y\sum_x p(y) p(x|y)^\alpha\right), \\
H_\alpha^J(X|Y) &:= \frac{1}{1-\alpha}\left[\log\sum_{x,y} p(x,y)^\alpha  - \log\sum_y p(y)^\alpha \right], \\
H_\alpha^C(X|Y) &:= \frac{1}{1-\alpha} \sum_y p(y) \log \sum_x p(x|y)^\alpha, 
\end{align}
 where the first was originally given by Arimoto~\cite{Arimoto77}, the second by Hayashi and Skoric \etal~\cite{Hayashi11, Skoric11}, the third by Jizba and Arimitsu~\cite{Jizba04} and the fourth by Cachin~\cite{Cachin97}\footnote{Some of these definitions have implicitly appeared in the literature before, see e.g.~\cite{csiszar1995generalized}}.  
 
In the remainder of this section we will discuss some useful properties of the above definitions, see also~\cite{teixeira2012conditional, IS13, FB14}. 
We have,
\begin{align}
\lim_{\alpha\rightarrow 1} H_\alpha(X) = H(X)
\end{align}
and we would expect for every conditional R\'enyi entropy $H^*_\alpha$ that the following holds
\begin{align}
\lim_{\alpha\rightarrow 1} H^*_\alpha(X|Y) = H(X|Y). 
\end{align}
Two further important properties are monotonicity
  \begin{align}
 H^*_\alpha(X|YZ) \leq H^*_\alpha(X|Z)
 \end{align}
and the (strong) chain rule
 \begin{align}
 H^*_\alpha(X|YZ) = H^*_\alpha(XY|Z) - H^*_\alpha(Y|Z),
 \end{align}
 both of which hold for the conditional Shannon entropy, but not necessarily for all R\'enyi generalizations. 
 
However, note that while conditional R\'enyi entropies generally are not even subadditive, they are additive for independent pairs of random variables, i.e. 
 \begin{align}
 H^*_\alpha(X_1X_2 | Y_1Y_2) =  H^*_\alpha(X_1|Y_1) +  H^*_\alpha(X_2|Y_2), 
 \end{align}
 when $(X_1,Y_1)$ and $(X_2,Y_2)$ are independent of each other. 
 
 Finally, a commonly used quantity is the so-called \textit{min-entropy}, defined as 
 \begin{align}
 H_\infty(X) = -\log\max_x p(x)
 \end{align}
 and its conditional version
 \begin{align}
 H_\infty(X|Y) = -\log\sum_y p(y)\max_x p(x|y). 
 \end{align}
 Ideally, we would have
 \begin{align}
 \lim_{\alpha\rightarrow\infty} H^*_\alpha(X|Y) = H_\infty(X|Y). 
 \end{align}
 Which of the above properties hold for the several conditional R\'enyi entropies considered is summarized in Table~\ref{table:properties}. 
 
 \begin{table}[t!]
\renewcommand{\arraystretch}{1.3}
\caption{Table of several properties for the considered R\'enyi entropies.}
\label{table:properties}
\centering
\begin{tabular}{c c c c| c } 
 $\lim_{\alpha\rightarrow 1}$ & $\lim_{\alpha\rightarrow\infty}$ & Chain rule & Monotonicity & * \\ 
 \hline
 \cmark~\cite{Arimoto77} & \cmark~\cite{FB14, IS13} & (weak)~\cite{FB14} & \cmark~\cite{Arimoto77} & $H^A_\alpha$ \\ 
\cmark & \xmark & \xmark~\cite{FB14} & \cmark~\cite{IS13}  & $H^H_\alpha$  \\
\cmark & \xmark & \cmark & \xmark~\cite{FB14} & $H^J_\alpha$  \\
 \cmark & \xmark & \xmark~\cite{Cachin97} & \xmark~\cite{FB14} & $H^C_\alpha$ 
\end{tabular}
\end{table}

 In the next section, we will show how to generalize information combining bounds to the R\'enyi entropies discussed above. We will devote one section to each of the entropies, starting with Arimotos conditional entropy. 
 
 
\section{R\'enyi Bounds on Information Combining}
\subsection{Bounds for $H_{\alpha}^{A}(X_1+X_2|Y_1Y_2)$} 

 A particular feature of the Shannon entropy that allowed to prove the classical bounds, is that a conditional Shannon entropy can be written as convex combination of unconditioned Shannon entropies. This important feature does not hold for R\'enyi entropies in general. Nevertheless, for $H_{\alpha}^{A}$ we have the following equality, providing us with a similar tool: 
 \begin{align} 
 H_{\alpha}^{A}(X|Y) &=\frac{\alpha}{1-\alpha}\log\left(\sum_y p(y) e^{\frac{1-\alpha}{\alpha} H_{\alpha}^{A}(X|Y=y) }\right).
 \end{align}
 This motivates us to define the following quantity: 
 \begin{align}
  K_{\alpha}^{A}(X|Y) &=  e^{\frac{1-\alpha}{\alpha} H_{\alpha}^{A}(X|Y) },
 \end{align}
 which takes values on $[1, \delta_\alpha^A := 2^{\frac{1-\alpha}{\alpha}} ]$. 
 For a more convenient notation we will furthermore use, in analogy to the binary entropy, $k_{\alpha}^{A}(p)$ whenever $X$ is a binary random variable with probability distribution $\{p,1-p\}$ (and trivial conditioning system): $K^A_\alpha(X) = k^A_\alpha(p)$. 
 
  We will see that the crucial quantity now is the following:
 \begin{align}
\kk^A_\alpha(x,y) = {k^A_{\alpha}}\left({k^A_\alpha}^{-1}(x) \ast {k^A_\alpha}^{-1}(y) \right).   \label{kfuncA}
 \end{align}

 Following the proof technique of the Shannon bounds on information combining, we get the following results. 
 \begin{thm}[$H_{\alpha}^{A}$ BSC-bound]\label{thm:BSC-A}
 If, for a given $\alpha$, the function $\kk^A_\alpha(x,y)$  is convex in $x$ for fixed $y$ and vice versa, then the following holds: \\
 If $\alpha >1$, 
 \begin{align}
 H_\alpha^A(X_1+X_2|Y_1Y_2) \leq h_\alpha\left(h_\alpha^{-1}( H_\alpha^A(X_1|Y_1))\ast h_\alpha^{-1}( H_\alpha^A(X_2|Y_2))\right). \label{BSCupA}
 \end{align}
  If $\alpha <1$, 
 \begin{align}
 H_\alpha^A(X_1+X_2|Y_1Y_2) \geq h_\alpha\left(h_\alpha^{-1}( H_\alpha^A(X_1|Y_1))\ast h_\alpha^{-1}( H_\alpha^A(X_2|Y_2))\right). \label{BSCdownA}
 \end{align}
If $\kk^A_\alpha(x,y)$ is concave instead, the inequalities hold with $\leq$ and $\geq$ exchanged. These bounds are optimal, in the sense that equality is achieved by binary symmetric channels. 
 \end{thm}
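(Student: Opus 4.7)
The plan is to mimic the classical Mrs. Gerber's Lemma argument in the transformed coordinate $K_\alpha^A$, where Arimoto's conditional entropy is an ordinary linear average over the conditioning distribution. Concretely, I would view $K_\alpha^A(X|Y)$ as an expectation of $K_\alpha^A(X|Y{=}y)$ over $y$, push the combining operation through this expectation using the separate convexity (or concavity) of $\kk_\alpha^A$, and then undo the transformation back to $H_\alpha^A$, being careful that the sign of $(1-\alpha)/\alpha$ can flip the direction of the inequality depending on whether $\alpha$ is larger or smaller than $1$.

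First I would record the averaging identity
\[
K_\alpha^A(X|Y) = \sum_y p(y)\, K_\alpha^A(X|Y{=}y),
\]
which is an immediate consequence of the displayed formula for $H_\alpha^A(X|Y)$ written just before the definition of $K_\alpha^A$. Applied with conditioning system $Y_1Y_2$, and using that $(X_1,Y_1)$ and $(X_2,Y_2)$ are independent so that $p(y_1,y_2)=p(y_1)p(y_2)$, this turns $K_\alpha^A(X_1{+}X_2|Y_1Y_2)$ into a double expectation over $y_1,y_2$. Next, for each frozen pair $(y_1,y_2)$, the variable $X_i|Y_i{=}y_i$ is unconditioned Bernoulli with some parameter $p_i(y_i)$, and by independence $X_1{+}X_2$ is Bernoulli with parameter $p_1(y_1)\ast p_2(y_2)$, so
\[
K_\alpha^A(X_1{+}X_2|Y_1{=}y_1,Y_2{=}y_2) = k_\alpha^A\!\left(p_1(y_1)\ast p_2(y_2)\right) = \kk_\alpha^A\!\left(K_\alpha^A(X_1|Y_1{=}y_1),\, K_\alpha^A(X_2|Y_2{=}y_2)\right),
\]
by the definition~\eqref{kfuncA} together with $K_\alpha^A(X_i|Y_i{=}y_i) = k_\alpha^A(p_i(y_i))$.

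With these two identities in hand, I would apply Jensen's inequality twice to the double average: first in $y_1$ with $y_2$ fixed, using convexity of $\kk_\alpha^A$ in its first argument, and then in $y_2$, using convexity in its second. Separate convexity is exactly enough for this two-step argument. The result is
\[
K_\alpha^A(X_1{+}X_2|Y_1Y_2) \geq \kk_\alpha^A\!\left(K_\alpha^A(X_1|Y_1),\, K_\alpha^A(X_2|Y_2)\right).
\]
To pass back to $H_\alpha^A$, note that $k_\alpha^A(p)=\exp\!\bigl(\tfrac{1-\alpha}{\alpha}h_\alpha(p)\bigr)$, so $k_\alpha^{A,-1}(K_\alpha^A(X_i|Y_i))=h_\alpha^{-1}(H_i)$ with $H_i=H_\alpha^A(X_i|Y_i)$, and the right-hand side equals $\exp\!\bigl(\tfrac{1-\alpha}{\alpha}h_\alpha(h_\alpha^{-1}(H_1)\ast h_\alpha^{-1}(H_2))\bigr)$. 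Taking logarithms and dividing by $(1-\alpha)/\alpha$, whose sign is negative for $\alpha>1$ and positive for $\alpha<1$, produces exactly \eqref{BSCupA} and \eqref{BSCdownA}. If $\kk_\alpha^A$ is concave instead, Jensen reverses and the same sign analysis then yields the ``$\leq$ and $\geq$ exchanged'' variant.

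For optimality, observe that if $(X_i,Y_i)$ is the input-output pair of a BSC with crossover $p$, then $p(X_i|Y_i{=}y_i)$ equals $\{p,1-p\}$ (up to relabeling) for every $y_i$, so $K_\alpha^A(X_i|Y_i{=}y_i)$ is independent of $y_i$ and both applications of Jensen collapse to equalities, saturating the bound. The main substance of the argument is the averaging identity together with the passage $K\!\leftrightarrow\!H$; the only real pitfall is tracking the sign of $(1-\alpha)/\alpha$ consistently across the convex versus concave and $\alpha>1$ versus $\alpha<1$ cases. The genuinely difficult issue --- identifying the range of $\alpha$ for which $\kk_\alpha^A$ is separately convex or concave --- is an input to this theorem rather than something to be settled within its proof.
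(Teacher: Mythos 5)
Your proposal is correct and follows essentially the same route as the paper: both rewrite the conditional entropy as an average of $K_\alpha^A(X_1{+}X_2|Y_1{=}y_1,Y_2{=}y_2)$ over the product distribution, use the unconditioned combining identity to express each term via $\kk_\alpha^A$, apply Jensen twice in the separate arguments, and undo the monotone (decreasing for $\alpha>1$, increasing for $\alpha<1$) map back to $H_\alpha^A$. Your version merely makes explicit two points the paper leaves implicit — the sign bookkeeping for $(1-\alpha)/\alpha$ and the BSC equality case — which is a welcome clarification rather than a different argument.
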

 \begin{proof}
To prove the theorem, we will show Equation \ref{BSCupA} in the case of $\kk^A_\alpha(x,y)$ being convex. All the other combinations, for \ref{BSCupA} and \ref{BSCdownA}, follow directly. The prove strategy is similar to the proof in the Shannon entropy setting. 

Consider the following chain of equations: 
 \begin{align} 
 H_{\alpha}^{A}(X_1+X_2|Y_1Y_2) &=\frac{\alpha}{1-\alpha}\log\left(\sum_{y_1,y_2} p(y_1)p(y_2) e^{\frac{1-\alpha}{\alpha} H_{\alpha}^{A}(X_1+X_2|Y_1=y_1,Y_2=y_2) }\right) \\
	&=\frac{\alpha}{1-\alpha}\log\left(\sum_{y_1,y_2} p(y_1)p(y_2) e^{\frac{1-\alpha}{\alpha} h_{\alpha}(h_\alpha^{-1}(H_\alpha^A(X_1|Y_1=y_1))\ast h_\alpha^{-1}(H_\alpha^A(X_2|Y_2=y_2))) }\right) \\
	&=\frac{\alpha}{1-\alpha}\log\left(\sum_{y_1,y_2} p(y_1)p(y_2) {k^A_{\alpha}}\left({k^A_\alpha}^{-1}(K_\alpha^A(X_1|Y_1=y_1))\ast {k^A_\alpha}^{-1}(K_\alpha^A(X_2|Y_2=y_2)) \right) \right)	\\
	&\leq\frac{\alpha}{1-\alpha}\log\left({k^A_{\alpha}}\left({k^A_\alpha}^{-1}\left(\sum_{y_1} p(y_1)K_\alpha^A(X_1|Y_1=y_1)\right)\ast {k^A_\alpha}^{-1}\left(\sum_{y_2} p(y_2)K_\alpha^A(X_2|Y_2=y_2)\right) \right) \right)	 \\
	&= h_{\alpha}\left(h_\alpha^{-1}(H_\alpha^A(X_1|Y_1))\ast h_\alpha^{-1}(H_\alpha^A(X_2|Y_2))\right),
\end{align}
where all equalities follow simply by definition and rearranging. The inequality follows by using the convexity of $\kk^A_\alpha(x,y)$ twice, once in the first argument and once in the second. 
\end{proof}

 This result gives an extension of the lower bound for Shannon entropy information combining. 
 In a similar fashion, we can also generalize the upper bound: 

\begin{thm}[$H_{\alpha}^{A}$ BEC-bound]\label{thm:BEC-A}
 If, for a given $\alpha$, the function $\kk^A_\alpha(x,y)$
 is convex in $x$ for fixed $y$ and vice versa, then the following holds: \\
 If $\alpha >1$, 
 \begin{align}
 H_\alpha^A(X_1+X_2|Y_1Y_2) &\geq \frac{\alpha}{1-\alpha}\log\frac{(\delta_\alpha^A - K_\alpha^A(X_1|Y_1))(\delta_\alpha^A - K_\alpha^A(X_2|Y_2))}{1-\delta_\alpha^A} + \delta_\alpha^A \label{BEC11A}.
 \end{align}
  If $\alpha <1$, 
\begin{align}
 H_\alpha^A(X_1+X_2|Y_1Y_2) &\leq \frac{\alpha}{1-\alpha}\log\frac{(\delta_\alpha^A - K_\alpha^A(X_1|Y_1))(\delta_\alpha^A - K_\alpha^A(X_2|Y_2))}{1-\delta_\alpha^A} + \delta_\alpha^A .
 \end{align}
If $\kk^A_\alpha(x,y)$ is concave instead, the inequalities hold with $\leq$ and $\geq$ exchanged. These bounds are optimal, in the sense that equality is achieved by binary erasure channels. 
\end{thm}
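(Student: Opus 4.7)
The strategy is to mirror the proof of Theorem~\ref{thm:BSC-A}, replacing the Jensen step with a bilinear bound on $\kk^A_\alpha$ that is sharp at the four corners of $[1,\delta_\alpha^A]^2$. As in that proof, one starts from the identity
\begin{align*}
H_{\alpha}^{A}(X_1+X_2|Y_1Y_2) = \tfrac{\alpha}{1-\alpha}\log\sum_{y_1,y_2} p(y_1)p(y_2)\,\kk^A_\alpha\bigl(K_\alpha^A(X_1|Y_1{=}y_1),K_\alpha^A(X_2|Y_2{=}y_2)\bigr),
\end{align*}
so the entire task reduces to bounding the inner sum by an expression that depends only on $K_\alpha^A(X_i|Y_i)$.

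Next I would evaluate $\kk^A_\alpha$ at the four corners of the rectangle $[1,\delta_\alpha^A]^2$. Because $k^A_\alpha(0)=1$ (deterministic binary variable) and $k^A_\alpha(1/2)=\delta_\alpha^A$ (uniform binary variable), and because $0\ast 0 = 0$ while $0\ast\tfrac{1}{2} = \tfrac{1}{2}\ast\tfrac{1}{2} = \tfrac{1}{2}$, the corner $(1,1)$ gives $\kk^A_\alpha(1,1)=1$ while the other three corners all give $\delta_\alpha^A$. Invoking biconvexity of $\kk^A_\alpha$ (i.e.\ applying the single-variable convex chord inequality once in each argument and combining) then yields the bilinear envelope
\begin{align*}
\kk^A_\alpha(K_1,K_2) \leq \delta_\alpha^A + \frac{(\delta_\alpha^A-K_1)(\delta_\alpha^A-K_2)}{1-\delta_\alpha^A}
\end{align*}
for all $K_1,K_2\in[1,\delta_\alpha^A]$, which is precisely the interpolant pinned at the four computed corner values.

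The envelope is affine in each $K_i$, so averaging over $y_1,y_2$ commutes with the right-hand side and gives
\begin{align*}
\sum_{y_1,y_2} p(y_1)p(y_2)\,\kk^A_\alpha(\cdot,\cdot) \leq \delta_\alpha^A + \frac{(\delta_\alpha^A-K_\alpha^A(X_1|Y_1))(\delta_\alpha^A-K_\alpha^A(X_2|Y_2))}{1-\delta_\alpha^A}.
\end{align*}
Applying $\tfrac{\alpha}{1-\alpha}\log(\cdot)$ and tracking the sign of the prefactor produces \eqref{BEC11A} for $\alpha>1$ and the reversed inequality for $\alpha<1$; for concave $\kk^A_\alpha$ the envelope instead lower-bounds the function, so every inequality flips. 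Optimality is immediate: for a BEC each $K^A_\alpha(X_i|Y_i{=}y_i)$ lies in $\{1,\delta_\alpha^A\}$, where the biconvex envelope coincides with $\kk^A_\alpha$ term by term, so the bound is attained.

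The only step that is not pure bookkeeping is the clean derivation of the bilinear envelope from biconvexity together with the ``three corners equal $\delta_\alpha^A$, one equals $1$'' structure; I would write out the double chord inequality explicitly to confirm the quadratic expression, and verify $\delta_\alpha^A\neq 1$ (true for $\alpha\neq 1$) so that the denominator $1-\delta_\alpha^A$ is harmless.
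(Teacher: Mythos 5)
Your proposal is correct and follows essentially the same route as the paper: reduce to the pointwise bilinear envelope $\kk^A_\alpha(x,y)\leq \delta^A_\alpha + \frac{(\delta^A_\alpha-x)(\delta^A_\alpha-y)}{1-\delta^A_\alpha}$, average using its affineness in each argument, and apply $\tfrac{\alpha}{1-\alpha}\log(\cdot)$ with attention to the sign of the prefactor. The only cosmetic difference is that the paper obtains the envelope from a single chord in $x$ for arbitrary fixed $y$ (using the exact endpoint values $\kk^A_\alpha(\delta^A_\alpha,y)=\delta^A_\alpha$ and $\kk^A_\alpha(1,y)=y$), whereas you derive the same bilinear interpolant from the four corner values via two chord applications; both are valid and yield the identical bound.
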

\begin{proof}
 We will show Equation \ref{BEC11A} in the case of $\kk^A_\alpha(x,y)$ being convex. All the other statements follow similarly. 
 
 The crucial part of this proof is to find an upper bound on $\kk^A_\alpha(x,y)$. Since we assume that $\kk^A_\alpha(x,y)$ is convex, an upper bound is given by the line connecting the endpoints of its graph. Let's for now consider $y$ to be fixed. It can easily be seen that $\kk^A_\alpha(\delta^A_\alpha,y)=\delta^A_\alpha$ and $\kk^A_\alpha(1,y)=y$. Therefore we can give the bound
 \begin{align}
 \kk^A_\alpha(x,y) &\leq \frac{y-\delta^A_\alpha}{1-\delta^A_\alpha} x + \frac{\delta^A_\alpha(1-y)}{1-\delta^A_\alpha} \\
 &= \frac{(\delta^A_\alpha-x)(\delta^A_\alpha-y)}{1-\delta^A_\alpha} + \delta^A_\alpha
 \end{align}
 and the same for $x$ fixed. 
 With this, the proof follows via the same arguments as for the previous theorem: 
  \begin{align} 
 H_{\alpha}^{A}(X_1+X_2|Y_1Y_2) &=\frac{\alpha}{1-\alpha}\log\left(\sum_{y_1,y_2} p(y_1)p(y_2) e^{\frac{1-\alpha}{\alpha} H_{\alpha}^{A}(X_1+X_2|Y_1=y_1,Y_2=y_2) }\right) \\
	&=\frac{\alpha}{1-\alpha}\log\left(\sum_{y_1,y_2} p(y_1)p(y_2) e^{\frac{1-\alpha}{\alpha} h_{\alpha}(h_\alpha^{-1}(H_\alpha^A(X_1|Y_1=y_1))\ast h_\alpha^{-1}(H_\alpha^A(X_2|Y_2=y_2))) }\right) \\
	&=\frac{\alpha}{1-\alpha}\log\left(\sum_{y_1,y_2} p(y_1)p(y_2) {k^A_{\alpha}}\left({k^A_\alpha}^{-1}(K_\alpha^A(X_1|Y_1=y_1))\ast {k^A_\alpha}^{-1}(K_\alpha^A(X_2|Y_2=y_2)) \right) \right)	\\
	&\leq \frac{\alpha}{1-\alpha}\log\left(\sum_{y_1,y_2} p(y_1)p(y_2)  \frac{(\delta^A_\alpha-K_\alpha^A(X_1|Y_1=y_1))(\delta^A_\alpha-K_\alpha^A(X_2|Y_2=y_2))}{1-\delta^A_\alpha} + \delta^A_\alpha   \right) \\
	&= \frac{\alpha}{1-\alpha}\log\left( \frac{(\delta^A_\alpha-K_\alpha^A(X_1|Y_1))(\delta^A_\alpha-K_\alpha^A(X_2|Y_2))}{1-\delta^A_\alpha} + \delta^A_\alpha   \right)
\end{align}
 \end{proof} 
 
 Now, the crucial question is: When does the function $\kk^A_\alpha(x,y)$ posses the desired convexity or concavity property? 
 
 One important result in this direction can be found in \cite{HASC18}, which handles the different, but related, problem of generalizing information bottleneck functions. We rephrase it here for our convenience.
 \begin{lem}[Convexity result from \cite{HASC18}]
 For $\alpha\geq 2$, the function
  \begin{align}
 k_{\alpha}^A({k_\alpha^A}^{-1}(x)\ast {k_\alpha^A}^{-1}(y))
 \end{align}
 is convex in $x$ for fixed $y$ and vice versa. 
 \end{lem}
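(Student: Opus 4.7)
The plan is to reduce the convexity statement to a one-variable calculus check, use the explicit formula for $k_\alpha^A$, and then analyze the sign of a second derivative.

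First observe that for a binary distribution $\{p,1-p\}$ one has
\begin{align}
k_\alpha^A(p) = \left(p^\alpha + (1-p)^\alpha\right)^{1/\alpha},
\end{align}
which is the $\ell_\alpha$-norm of $(p,1-p)$, monotonically decreasing on $[0,1/2]$ from $1$ down to $\delta_\alpha^A$. Fix $y$ and set $v := (k_\alpha^A)^{-1}(y) \in [0,1/2]$. Writing $\phi(u) := k_\alpha^A(u)$ and $F(u) := \phi(u*v)$, parametrize by $u \in [0,1/2]$ so that $x = \phi(u)$. By the inverse-function theorem, convexity of the claimed function in $x$ is equivalent to
\begin{align}
\frac{F''(u)\phi'(u) - F'(u)\phi''(u)}{\phi'(u)^3} \geq 0, \qquad u \in (0,1/2),
\end{align}
and since $\phi'(u)<0$ on this interval, in turn to $F''(u)\phi'(u) - F'(u)\phi''(u) \leq 0$. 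Using $F'(u) = (1-2v)\phi'(w)$ and $F''(u) = (1-2v)^2 \phi''(w)$ with $w := u*v$, the task reduces to a two-point inequality between $\phi'$ and $\phi''$ evaluated at $u$ and at $w$.

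Second, I would substitute the explicit formulas
\begin{align}
\phi'(u) &= \phi(u)^{1-\alpha}\left(u^{\alpha-1} - (1-u)^{\alpha-1}\right),\\
\phi''(u) &= (\alpha-1)\phi(u)^{1-2\alpha}\left[\phi(u)^\alpha\left(u^{\alpha-2} + (1-u)^{\alpha-2}\right) - \left(u^{\alpha-1} - (1-u)^{\alpha-1}\right)^2\right],
\end{align}
together with the analogous expressions at $w$, and then clear the positive common factors. The resulting algebraic inequality in $u$, $v$, $w$ and $\alpha$ should separate into a sum of terms each of which becomes manifestly signed for $\alpha \geq 2$. A cleaner route, which I would attempt in parallel, is to work throughout with $g(u) := \phi(u)^\alpha = u^\alpha + (1-u)^\alpha$: the map $g$ is itself convex on $[0,1]$ precisely for $\alpha \geq 2$, and it satisfies the data-processing identity $g(u*v) \leq g(u)g(v)$ (a consequence of the $\ell_\alpha$ contraction $\|P_{X_1 \oplus X_2}\|_\alpha \leq \|P_{X_1}\|_\alpha \|P_{X_2}\|_\alpha$ for independent binary $X_i$), which is likely the structural input from which the $\alpha \geq 2$ threshold naturally emerges.

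The main obstacle is the final sign analysis: the inequality produced after substitution is not of a standard form, and one must find a grouping of its terms that makes the sign transparent for $\alpha \geq 2$ while allowing it to fail for smaller $\alpha$. A secondary subtlety is the boundary point $u=1/2$, where $\phi'(u)$ vanishes; here both numerator and denominator of the second-derivative expression vanish to the same order, and the convexity has to be confirmed via a Taylor expansion about the symmetry point. Once the core inequality is in place, the symmetric statement (convexity in $y$ for fixed $x$) follows from the symmetry of $u*v$ in its two arguments.
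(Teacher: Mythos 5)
The paper does not actually prove this lemma: it is imported from \cite{HASC18} (``We rephrase it here for our convenience''), so there is no internal proof to compare against and your attempt has to stand entirely on its own. As a standalone argument it has a genuine gap. Your setup is correct and routine: $k_\alpha^A(p)=(p^\alpha+(1-p)^\alpha)^{1/\alpha}$, the change of variables $x=\phi(u)$, the formula $G''=(F''\phi'-F'\phi'')/(\phi')^3$, and your expressions for $\phi'$ and $\phi''$ all check out, as does the sign flip coming from $\phi'<0$. But everything up to that point is bookkeeping. The entire content of the lemma is the resulting two-point inequality $(1-2v)^2\phi''(w)\phi'(u)\le(1-2v)\phi'(w)\phi''(u)$ with $w=u\ast v$, and for that you offer only the expectation that the terms ``should separate into a sum of manifestly signed terms.'' That is exactly the step where the threshold $\alpha=2$ must emerge and where the real work lies; without it you have reformulated the lemma, not proved it.

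Moreover, the ``cleaner route'' you propose in parallel rests on two false claims. The function $g(u)=u^\alpha+(1-u)^\alpha$ has $g''(u)=\alpha(\alpha-1)\left(u^{\alpha-2}+(1-u)^{\alpha-2}\right)$, which is positive for every $\alpha>1$, so its convexity does not single out $\alpha\ge2$ and cannot be the source of that threshold. And the asserted identity $g(u\ast v)\le g(u)g(v)$, equivalently $\phi(u\ast v)\le\phi(u)\phi(v)$, fails for $\alpha>1$: at $u=v=\tfrac12$ the left side is $2^{(1-\alpha)/\alpha}$ while the right side is $2^{2(1-\alpha)/\alpha}$, which is strictly smaller. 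For probability vectors and $\alpha>1$ the $\ell_\alpha$ norms are at most $1$, so the product is smaller than either factor; the correct Young-type bound is $\phi(u\ast v)\le\min\{\phi(u),\phi(v)\}$, which points in the opposite direction from what you would need. So neither branch of your plan currently delivers the $\alpha\ge2$ statement; to complete the proof you must either carry out the sign analysis of the two-point inequality explicitly or reproduce the argument of \cite{HASC18}.
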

 
The natural question is now: Is the function always convex? It turns out, that this is not the case. In the following lemma we show by concrete examples that the function is neither convex nor concave for some values of $\alpha$. 
 \begin{lem}[Counterexample for $\kk^A_\alpha(x,y)$]\label{CE-A}
 For $\alpha\in(1.58, 1.97)$, the function
  \begin{align}
 k_{\alpha}^A({k_\alpha^A}^{-1}(x)\ast {k_\alpha^A}^{-1}(y))
 \end{align}
 is neither convex nor concave in $x$ for fixed $y$ and vice versa. 
 \end{lem}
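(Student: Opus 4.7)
The plan is to falsify both convexity and concavity in $x$ by exhibiting, for each $\alpha$ in the stated range, two points in the domain at which the second partial derivative $\partial_x^2 \kk^A_\alpha(x,y)$ takes strictly opposite signs. Since $k_\alpha^A(p) = (p^\alpha + (1-p)^\alpha)^{1/\alpha}$ is an explicit elementary function, the whole argument reduces to a numerical verification at a handful of sample points, with the asserted open interval $(1.58, 1.97)$ corresponding exactly to the range of $\alpha$ for which both signs are realized.

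First, to avoid dealing with the transcendental inverse $(k_\alpha^A)^{-1}$ directly, I would parametrize the domain through $p = (k_\alpha^A)^{-1}(x)$ and $q = (k_\alpha^A)^{-1}(y)$, both ranging over $(0, 1/2]$. Setting $r := p \ast q$, we have $\kk^A_\alpha(x,y) = k_\alpha^A(r)$, and implicit differentiation using $dp/dx = 1/k_\alpha^{A\prime}(p)$ yields
\begin{equation}
\partial_x \kk^A_\alpha(x,y) = \frac{k_\alpha^{A\prime}(r)\,(1-2q)}{k_\alpha^{A\prime}(p)}.
\end{equation}
A second application of the chain rule produces an explicit rational expression for $\partial_x^2 \kk^A_\alpha(x,y)$ in terms of $k_\alpha^A$, $k_\alpha^{A\prime}$ and $k_\alpha^{A\prime\prime}$ evaluated at $p$ and $r$. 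This expression can be evaluated to arbitrary precision (or symbolically) at any rational choice of $(\alpha, p, q)$, without ever inverting $k_\alpha^A$.

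Next, for each $\alpha$ in the claimed range I would fix a convenient $q$, say $q = 1/4$, and view $\partial_x^2 \kk^A_\alpha$ as a smooth function of $p \in (0,1/2)$. Numerical exploration indicates that this function crosses zero strictly inside the interval for every $\alpha \in (1.58, 1.97)$, so one can pick one $p$ on each side of the crossing and certify a strict sign change. Concretely, the lemma would be supported by a short table of sample values at a few representative $\alpha$ (e.g.\ $1.6,\,1.75,\,1.9$), each with a pair $(p_1,p_2)$ at which the second derivative is positive and negative respectively. The symmetric assertion for fixed $x$ is immediate from $\kk^A_\alpha(x,y) = \kk^A_\alpha(y,x)$.

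The only real obstacle is that no single witness triple works uniformly across the entire interval: both the zero locus of $\partial_x^2 \kk^A_\alpha$ and the transition points drift with $\alpha$. To make the argument fully rigorous rather than sampled, I would parametrize the zero set locally as a smooth curve in $(\alpha, p)$-space via the implicit function theorem applied to $\partial_x^2 \kk^A_\alpha = 0$, and verify by direct estimation that this curve remains inside $(0,1/2)$ throughout $(1.58, 1.97)$; the endpoint values $1.58$ and $1.97$ would then be identified as the $\alpha$-values at which the sign change first appears and subsequently disappears, up to rounding. In practice, the cleanest way to present the counterexample is a plot of $p \mapsto \partial_x^2 \kk^A_\alpha(k_\alpha^A(p), k_\alpha^A(1/4))$ for one representative $\alpha$, on which the sign change is visibly strict, together with the explicit derivative formula so that the reader can reproduce the check.
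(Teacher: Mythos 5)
Your proposal is correct in substance but takes a genuinely different route from the paper. You attack the statement locally, by computing $\partial_x^2 \kk^A_\alpha$ explicitly through the parametrization $p=(k^A_\alpha)^{-1}(x)$ and exhibiting interior points where it changes sign. The paper instead argues indirectly and globally: it observes that convexity would force the BSC bound of Theorem~\ref{thm:BSC-A} to dominate the BEC bound of Theorem~\ref{thm:BEC-A} for \emph{every} channel (and concavity would force the reverse order), and then evaluates the difference $\Delta^A(BSC(p),\alpha)$ of the two bounds on just two witness channels, $BSC(10^{-6})$ and $BSC(0.49)$, finding opposite signs for $\alpha\in(1.58,1.97)$. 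This is a function-versus-chord test along the diagonal $x=y$, so the paper never needs the second derivative or the inverse of $k^A_\alpha$ at all; its numerical burden is two one-parameter curves in $\alpha$. Your approach buys a more direct and self-contained certificate (a strict sign change of an explicit elementary expression), and your continuation argument via the implicit function theorem is, if carried out, actually more rigorous than the paper's sampled verification. The one caution: you should not expect your particular slice $q=1/4$ to realize the sign change across the whole interval, nor the endpoints $1.58$ and $1.97$ to emerge from it --- those endpoints are artifacts of the paper's specific witnesses (cf.\ Remark~\ref{rem:CE-A}, which already notes that sharper numerics pushes the range to $(1.5783,2)$), and the paper's witnesses live at the extremes $p\approx 0$ and $p\approx 1/2$, suggesting the non-convexity is most visible near the boundary of the domain rather than at $q=1/4$. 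So your sign-change locus in $(\alpha,p)$ would need to be located afresh, possibly with $q$ also varying, before the two intervals can be matched.
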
 
 \begin{proof}
 Remember that the convexity or concavity of $\kk^A_\alpha(x,y)$ with some $\alpha$ would directly imply the validity of the BSC and BEC bounds for all channels and that value of $\alpha$. This means also that there would be a natural order between the two, since one is an upper bound and one a lower bound, again for all channels. In the following, we will show that for some values of $\alpha$ this order relation does not exist, i.e. for some channels the order is opposite to what it is for other channels. Therefore, for these values of $\alpha$ the function could neither have been convex nor concave. 
 
 To this end, let's consider both bounds for a binary symmetric channel with crossover probability $p$, denoted $BSC(p)$. 
 We have 
 \begin{align}
 H_{\alpha}^{A}(BSC(p)) = h_\alpha(p)
 \end{align}
 and therefore the BSC bound for two of these channels is given by $h_\alpha(p\ast p)$. Evaluating the BEC bound for two $BSC(p)$ and defining the difference between both as
 \begin{align}
 \Delta^A(BSC(p),\alpha) := h(p\ast p)  -  \frac{\alpha}{1-\alpha}\log\frac{(\delta_\alpha^A - K_\alpha^A(BSC(p)))^2}{1-\delta_\alpha^A} + \delta_\alpha^A
 \end{align}
 gives us an easily testable quantity. 
 We simply verify numerically that 
 \begin{align}
 \Delta^A(BSC(10^{-6}), \alpha) < 0 \quad\text{for}\; \alpha\in(1,1.97), \\
 \Delta^A(BSC(0.49), \alpha) > 0 \quad\text{for}\; \alpha\in(1.58,2).
 \end{align}
 By the above argument this leads to the desired result. The behavior is illustrated in Figure~\ref{Fig:CE-A} for similar values of $\alpha$. 
 \end{proof}
 
\begin{figure}[t!]
\centering
\begin{overpic}[scale=0.4]{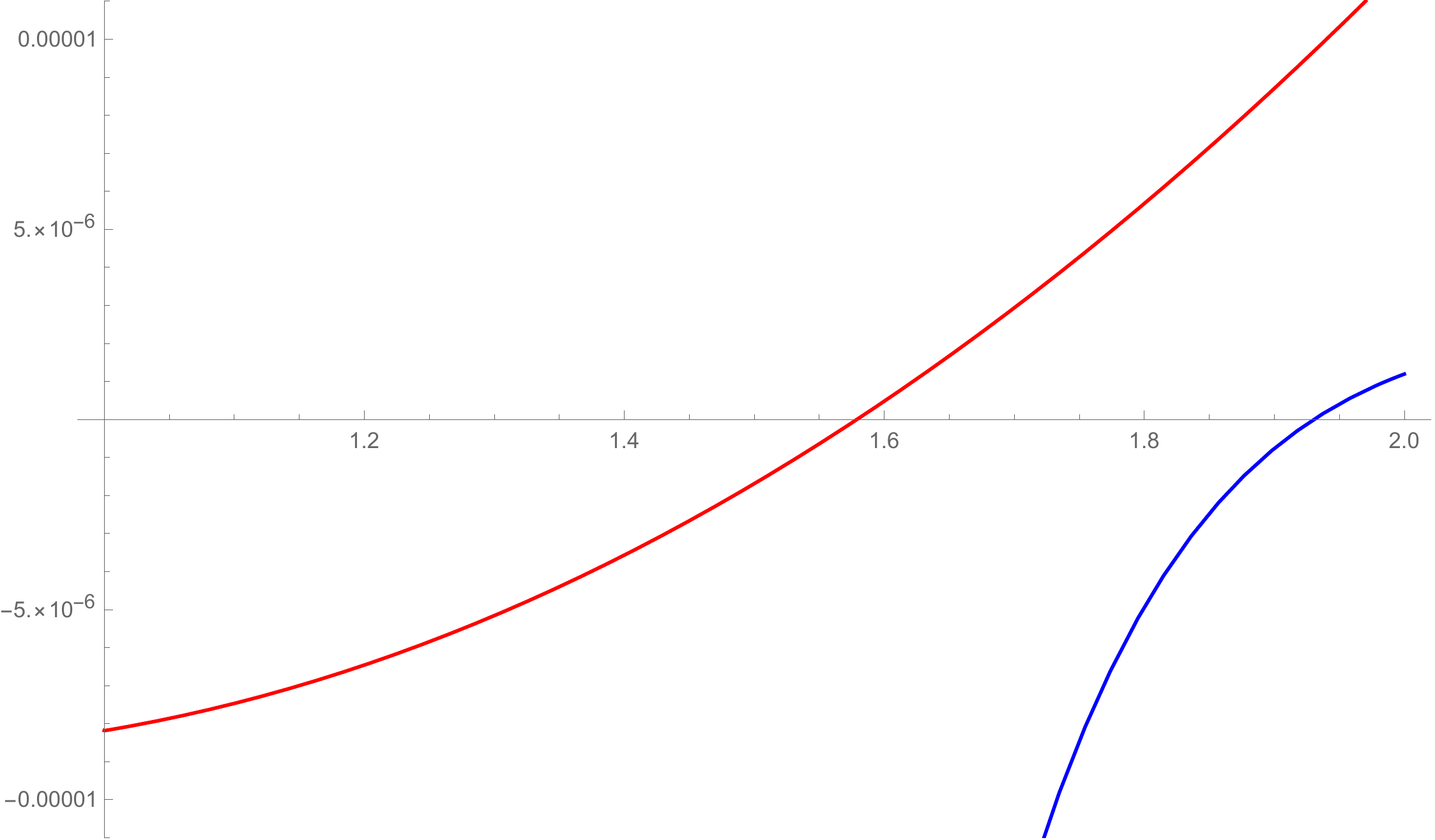}   
\put (-1,62) {$\Delta^A$}
\put (105,28) {$\alpha$}
\end{overpic}
\caption{\label{Fig:CE-A} 
The blue curve is $\Delta^A(BSC(10^{-3}), \alpha)$ and the red curve is $\Delta^A(BSC(0.46), \alpha)$. This shows that $\Delta^A$ can be positive or negative for the same value of $\alpha$, which means ultimately that $\kk^A$ cannot posses the desired convexity or concavity property.}
\end{figure}

 \begin{remark}\label{rem:CE-A}
The values in Lemma~\ref{CE-A} are those that we could conclusively verify. Furthermore, additional numerics suggests that the same methods should provide counterexamples in the range $\alpha \in (1.5783,2)$ when evaluated with sufficiently high precision.
\end{remark} 

Additionally, the numerical evidence leads us to making the following conjecture for the Arimoto conditional entropy.
\begin{conjecture}
There exists a value $1 < \hat\alpha < 1.5783$, such that 
 \begin{align}
 k_{\alpha}^A({k_\alpha^A}^{-1}(x)\ast {k_\alpha^A}^{-1}(y)) \label{kfunc3}
 \end{align}
 is convex for $0<\alpha<1$ and concave for $1<\alpha\leq\hat\alpha$. 
\end{conjecture}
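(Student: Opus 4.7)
The plan is to reduce both halves of the conjecture to a single sign inequality, then propagate it away from the Shannon point $\alpha=1$ using continuity. First, parameterise $x = k^A_\alpha(p)$ and $y = k^A_\alpha(q)$ with $p, q \in [0, 1/2]$, so that $\kk^A_\alpha(x, y) = f(p \ast q)$ for $f = k^A_\alpha$. A direct chain-rule computation, using $\partial(p \ast q)/\partial p = 1 - 2q$, gives
\[
\partial_x^2\,\kk^A_\alpha(x,y) \;=\; \frac{(1-2q)\, f'(p \ast q)}{f'(p)^2}\bigl[(1-2q)\,\eta(p \ast q) - \eta(p)\bigr],
\]
where $\eta(p) := f''(p)/f'(p)$. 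On $(0,1/2)$ the sign of $f'$ agrees with that of $1-\alpha$, so \emph{both} the desired convexity for $\alpha<1$ and the desired concavity for $\alpha>1$ reduce to the single condition
\[
F(p,q;\alpha) \;:=\; (1-2q)\,\eta(p \ast q;\alpha) - \eta(p;\alpha) \;\geq\; 0, \qquad p,q\in[0,1/2].
\]

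The key observation is that $F$ extends continuously in $\alpha$ down to $\alpha=1$. A Taylor expansion of $f$ in $\alpha - 1$ gives $f(p) \approx 1 - (\alpha - 1)\,h(p)$ with $h$ the binary Shannon entropy, so $\eta(p;\alpha) \to h''(p)/h'(p)$ as $\alpha \to 1$. In this limit $F(p, q; 1) \geq 0$ is precisely the statement that $h \circ (h^{-1} \ast h^{-1})$ is convex in each argument, i.e.\ the classical Mrs.\ Gerber's Lemma, and the inequality is strict on the open square. Provided one handles the boundary $p, q \to 0$ and $p, q \to 1/2$ via Taylor expansion of $\eta(\,\cdot\,;\alpha)$ there, this yields $F \geq 0$ uniformly on $[0,1/2]^2$ for all $\alpha$ in some open neighbourhood of $1$, immediately establishing the existence of $\hat\alpha > 1$.

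To extend down to the full range $0 < \alpha < 1$ and up to a concrete $\hat\alpha$, I would compute $\eta(p;\alpha)$ explicitly from $f(p) = (p^\alpha + (1-p)^\alpha)^{1/\alpha}$. After introducing the ratio $t := (p/(1-p))^\alpha$, the function $\eta(\,\cdot\,;\alpha)$ becomes an algebraic combination of $t$ and $\alpha$, and $F \geq 0$ should reduce to a monotonicity statement amenable to power-mean or weighted AM--GM inequalities at each fixed $\alpha$-slice. For the upper endpoint, the counterexamples in Lemma~\ref{CE-A} and Remark~\ref{rem:CE-A} pinpoint the critical $(p,q)$-locus near $p=q\to 1/2$ and $p=q \to 0$, so one would define $\hat\alpha$ as the supremum of $\alpha$ for which $\inf_{p,q} F(p,q;\alpha) \geq 0$ on this locus; the bound $\hat\alpha < 1.5783$ then follows automatically from those counterexamples.

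The main obstacle is the analytic complexity of $F(p, q; \alpha)$ when $\alpha \neq 1$. Unlike the Shannon limit where $\eta$ factors nicely, $\eta(p;\alpha)$ is a rational-exponential function of $p^\alpha$ and $(1-p)^\alpha$ that resists clean manipulation, and a fully analytical determination of the sharp $\hat\alpha$ appears out of reach. A realistic outcome is a rigorous proof of the conjecture for some explicit $\hat\alpha_0 \in (1, 1.5783)$ short of the true supremum, complemented by interval-arithmetic verification on the conjectured critical locus. Finally, establishing that the concavity region is a single \emph{interval} in $\alpha$ (rather than $F$ dipping negative and returning positive as $\alpha$ grows) may itself require an auxiliary monotonicity-in-$\alpha$ argument evaluated at the minimising $(p, q)$.
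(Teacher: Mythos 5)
First, note that the paper does not prove this statement: it is stated explicitly as a conjecture supported only by numerical evidence, so there is no proof of record for you to match. Your reduction of both halves of the claim to the single inequality $F(p,q;\alpha) := (1-2q)\,\eta(p\ast q;\alpha) - \eta(p;\alpha) \ge 0$ with $\eta = f''/f'$ and $f = k^A_\alpha$ is correct --- the sign of $f'$ on $(0,1/2)$ does flip with $1-\alpha$, and your second-derivative formula checks out --- and it is in the same spirit as the appendix argument the paper gives for $k^H_\alpha$, where convexity is reduced to the first-order condition $g^x_\alpha(y)\ge g^x_\alpha(x)$. As a reformulation, this part is sound and potentially useful.

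The gap is in the continuity step, and it is not a technicality: it is where the entire difficulty of the conjecture is concentrated. $F(p,q;1)\ge 0$ is indeed equivalent to Mrs.\ Gerber's Lemma, but $F(\cdot,\cdot;1)$ is \emph{not} bounded away from zero on $[0,1/2]^2$ --- it vanishes identically on the line $q=0$ and degenerates as $p,q\to 0$ and $p,q\to 1/2$ --- so strict positivity on the open square plus continuity in $\alpha$ does \emph{not} yield $F\ge 0$ in any neighbourhood of $\alpha=1$. A perturbation can push $F$ negative exactly where $F(\cdot;1)=0$, and the paper's counterexamples in Lemma~\ref{CE-A} (built from $BSC(10^{-6})$ and $BSC(0.49)$) live precisely in the boundary regimes you propose to ``handle via Taylor expansion''; that expansion is the whole proof and is not carried out. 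The remaining steps are likewise programmatic: the extension to all $\alpha\in(0,1)$ rests on unspecified power-mean inequalities, the definition of $\hat\alpha$ via an infimum over only a conjectured critical locus is not justified (nonnegativity is needed on the full square), and you yourself flag that showing the concavity region is an interval needs a further argument. What you have is a correct reformulation plus a plan of attack; the conjecture remains open after your argument, as it does in the paper.
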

Finally, remember that 
 \begin{align}
 \lim_{\alpha\rightarrow\infty} H^A_\alpha(X|Y) = H_\infty(X|Y). 
 \end{align}
 This value of $\alpha$ shows a particular behavior and deserves special attention. 
\begin{lem}
We have that $\kk^A_\infty(x,y)$ is independently linear in $x$ and $y$ and therefore 
\begin{align}
 H_\infty^A(X_1+X_2|Y_1Y_2) = h_\infty(h_\infty^{-1}( H_\infty^A(X_1|Y_1))\ast h_\infty^{-1}( H_\infty^A(X_2|Y_2))).
\end{align}
\end{lem}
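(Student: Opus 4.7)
The plan is to compute $k_\infty^A$ explicitly, show that $\kk^A_\infty(x,y)$ is affine in each argument separately, and then observe that Theorem~\ref{thm:BSC-A} collapses to an equality for affine functions, giving the claimed identity with no further work.

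First I would take $\alpha\to\infty$ limits in the definitions of $k_\alpha^A$ and $\delta_\alpha^A$. Since $\frac{1-\alpha}{\alpha}\to -1$ and $\bigl(\sum_x p(x|y)^\alpha\bigr)^{1/\alpha}\to \max_x p(x|y)$, one obtains
\begin{align}
K_\infty^A(X|Y)=\sum_y p(y)\max_x p(x|y),\qquad k_\infty^A(p)=\max(p,1-p),\qquad \delta_\infty^A=\tfrac12.
\end{align}
I would then fix the branch ${k_\infty^A}^{-1}:[\tfrac12,1]\to[0,\tfrac12]$, $x\mapsto 1-x$; the ambiguity in this choice is irrelevant since $(1-u)\ast v=1-u\ast v$ and $k_\infty^A(q)=k_\infty^A(1-q)$, so $\kk^A_\infty$ is independent of the chosen branch.

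Next I would use the elementary identity $\tfrac12-u\ast v=\tfrac12(1-2u)(1-2v)$ to conclude that $u\ast v\in[0,\tfrac12]$ whenever $u,v\in[0,\tfrac12]$, so that the image of ${k_\infty^A}^{-1}\otimes{k_\infty^A}^{-1}$ is mapped by $\ast$ back into the domain on which $k_\infty^A$ takes the form $k_\infty^A(q)=1-q$. Substituting $u=1-x$ and $v=1-y$ gives
\begin{align}
\kk^A_\infty(x,y)=1-(1-x)\ast(1-y)=1-x-y+2xy,
\end{align}
which is manifestly affine in $x$ for every fixed $y\in[\tfrac12,1]$, and symmetrically in $y$.

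Finally, since an affine function is both convex and concave, Theorem~\ref{thm:BSC-A} at $\alpha>1$ yields simultaneously the upper bound (\ref{BSCupA}) and its reverse, so the two inequalities collapse to the claimed equality; taking $\alpha\to\infty$ then delivers the statement for $h_\infty$. Alternatively, the Jensen step inside the proof of Theorem~\ref{thm:BSC-A} becomes an equality whenever $\kk^A_\alpha$ is affine, which gives the identity directly. There is no substantive obstacle: the only care required is verifying that $u\ast v$ remains in $[0,\tfrac12]$ so that the chosen inverse branch is valid, and this is immediate from the identity $\tfrac12-u\ast v=\tfrac12(1-2u)(1-2v)$.
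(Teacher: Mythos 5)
Your computation of $k_\infty^A(p)=\max(p,1-p)$ and the branch-independent closed form $\kk^A_\infty(x,y)=1-x-y+2xy$ is essentially the paper's own case analysis (the paper checks both inverse branches and both signs of $c-\tfrac12$ and arrives at the same linear expression $1-x\ast c$), and concluding via equality in the Jensen step is exactly the intended argument. One caution: your first route --- applying Theorem~\ref{thm:BSC-A} at finite $\alpha>1$ and then letting $\alpha\to\infty$ --- does not work, since affineness of $\kk^A_\alpha$ is established only at $\alpha=\infty$ (and indeed fails for some finite $\alpha$ by Lemma~\ref{CE-A}); your stated alternative, running the proof of Theorem~\ref{thm:BSC-A} directly at $\alpha=\infty$ where the convex-combination identity $K_\infty^A(X|Y)=\sum_y p(y)K_\infty^A(X|Y=y)$ still holds and the Jensen step is an equality, is the correct one.
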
 
\begin{proof}
Remember from the definition, for binary $X$
 \begin{align}
k^A_\infty(p) = \max\{ p, 1-p \}. 
 \end{align}
Take $x\in[\frac12,1]$ and fix $c\geq\frac12$. Now, ${k^A_\infty}^{-1}(x)$ can be either $x\geq\frac12$ or $1-x\leq\frac12$; in the former case $x\ast c \leq\frac12$ and therefore $k^A_\infty(x\ast c)=1-x\ast c$, in the latter case $(1-x)\ast c \geq\frac12$ and $k^A_\infty((1-x)\ast c)=(1-x)\ast c = 1-x\ast c$. Therefore, the function  is identical and linear in both cases. Similarly one can show that when $c\leq\frac12$ then the function simply becomes $x\ast c$ which is again linear. 
\end{proof}
The last lemma is remarkable as it reproduces an equality that we usually only have for unconditioned entropies. 


\subsection{Bounds for $H_{\alpha}^{H}(X_1+X_2|Y_1Y_2)$} 

Similar to $H_{\alpha}^{A}$ we get the following equality for $H_{\alpha}^{H}$: 
 \begin{align} 
  H_{\alpha}^{H}(X|Y) &=\frac{1}{1-\alpha}\log\left(\sum_y p(y) e^{(1-\alpha) H_{\alpha}^{H}(X|Y=y) }\right) .
 \end{align}
 This motivates us to define the following quantity, 
 \begin{align}
  K_{\alpha}^{H}(X|Y) &= e^{(1-\alpha) H_{\alpha}^{H}(X|Y) } ,
 \end{align}
 which takes values on $[1, \delta_\alpha^H := 2^{1-\alpha} ]$. 
 For a more convenient notation we will furthermore use, in analogy to the binary entropy, $k_{\alpha}^{H}(p)$ whenever $X$ is a binary random variable with probability distribution $\{p,1-p\}$ (and trivial conditioning system): $K^H_\alpha(X) = k^H_\alpha(p)$.
 
   We will see that the crucial quantity here is the following
 \begin{align}
\kk^H_\alpha(x,y) = {k^H_{\alpha}}\left({k^H_\alpha}^{-1}(x)\ast {k^H_\alpha}^{-1}(y) \right).  \label{kfuncH}
 \end{align}
 
  Following the proof technique of the standard bounds on information combining, we get the following results. 
 \begin{thm}[$H_{\alpha}^{H}$ BSC-bound]
 If, for a given $\alpha$, the function $\kk^H_\alpha(x,y)$  is convex in $x$ for fixed $y$ and vice versa, then the following holds: \\
 If $\alpha >1$,
 \begin{align}
 H_\alpha^H(X_1+X_2|Y_1Y_2) \leq h_\alpha(h_\alpha^{-1}( H_\alpha^H(X_1|Y_1))\ast h_\alpha^{-1}( H_\alpha^H(X_2|Y_2))). \label{BSCupH}
 \end{align}
  If $\alpha <1$,
 \begin{align}
 H_\alpha^H(X_1+X_2|Y_1Y_2) \geq h_\alpha(h_\alpha^{-1}( H_\alpha^H(X_1|Y_1))\ast h_\alpha^{-1}( H_\alpha^H(X_2|Y_2))). \label{BSCdownH}
 \end{align}
If $\kk^H_\alpha(x,y)$ is concave instead, the inequalities hold with $\leq$ and $\geq$ exchanged. These bounds are optimal, in the sense that equality is achieved by binary symmetric channels. 
 \end{thm}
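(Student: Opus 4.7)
The plan is to follow the proof of Theorem~\ref{thm:BSC-A} essentially verbatim, since the structural ingredients are identical in the $H_\alpha^H$ setting. The enabling identity
\begin{align}
H_{\alpha}^{H}(X|Y) = \frac{1}{1-\alpha}\log\!\left(\sum_y p(y)\, e^{(1-\alpha) H_{\alpha}^{H}(X|Y=y)}\right)
\end{align}
writes the conditional entropy as a monotone function of a convex combination of exponentiated pointwise conditional entropies, which is exactly the form that triggers a Jensen-type argument applied to $\kk^H_\alpha$.

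First, I would apply this identity to $H_\alpha^H(X_1+X_2|Y_1Y_2)$ with joint side information $(Y_1,Y_2)$. Since by assumption $(X_1,Y_1)$ and $(X_2,Y_2)$ are independent pairs, $p(y_1,y_2)$ factorises and, for each fixed realisation $(y_1,y_2)$, the conditional distributions of $X_1$ and $X_2$ are independent binary distributions. Consequently $H_\alpha^H(X_1+X_2|Y_1=y_1,Y_2=y_2)$ reduces to the unconditional R\'enyi entropy of the convolution, so the identity
\begin{align}
H_\alpha(X_1+X_2) = h_\alpha\!\left(h_\alpha^{-1}(H_\alpha(X_1))\ast h_\alpha^{-1}(H_\alpha(X_2))\right)
\end{align}
lets me replace the inner conditional entropy with $h_\alpha$ of a binary convolution. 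Translating into the $K_\alpha^H$ and $\kk^H_\alpha$ notation yields
\begin{align}
H_{\alpha}^{H}(X_1+X_2|Y_1Y_2) = \frac{1}{1-\alpha}\log\!\left(\sum_{y_1,y_2} p(y_1)p(y_2)\,\kk^H_\alpha\!\left(K_\alpha^H(X_1|Y_1{=}y_1),K_\alpha^H(X_2|Y_2{=}y_2)\right)\right).
\end{align}

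The final step is to invoke the hypothesised convexity of $\kk^H_\alpha$ in each argument separately and apply Jensen's inequality twice (once per argument) to lower-bound the average inside the log by $\kk^H_\alpha$ evaluated at the averaged $K_\alpha^H$-values, which are precisely $K_\alpha^H(X_i|Y_i)$. The direction of the resulting inequality on $H_\alpha^H$ is then dictated by the sign of the prefactor $\frac{1}{1-\alpha}$: negative for $\alpha>1$, which reverses direction through the outer logarithm and gives the upper bound~\eqref{BSCupH}; positive for $\alpha<1$, which preserves direction and gives the lower bound~\eqref{BSCdownH}. Concavity flips Jensen, which exchanges $\leq$ and $\geq$ throughout. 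Optimality for the binary symmetric channel is automatic, since for a BSC input/output pair $y_i \mapsto K_\alpha^H(X_i|Y_i=y_i)$ is constant and Jensen's inequality saturates.

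Because this is essentially a transcription of the proof of Theorem~\ref{thm:BSC-A} with the prefactor $\frac{\alpha}{1-\alpha}$ replaced by $\frac{1}{1-\alpha}$, there is no genuine obstacle to clear; the only points requiring care are the sign bookkeeping when lifting the Jensen inequality inside the logarithm back out through the prefactor, and the (immediate) observation that the unconditional R\'enyi convolution identity applies fibrewise because the pairs $(X_i,Y_i)$ are independent.
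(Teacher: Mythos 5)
Your proposal is correct and is essentially identical to the paper's argument: the paper itself proves this theorem by stating that it works analogously to Theorem~\ref{thm:BSC-A}, i.e.\ the same decomposition identity, the fibrewise unconditional convolution identity, the double application of Jensen to $\kk^H_\alpha$, and the sign of the prefactor $\frac{1}{1-\alpha}$ fixing the direction. Your sign bookkeeping and the BSC-saturation remark are both accurate.
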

 \begin{proof}
The proof works analogous to that of Theorem~\ref{thm:BSC-A}. 
\end{proof}

\begin{thm}[$H_{\alpha}^{H}$ BEC-bound]\label{thm:BEC-H}
 If, for a given $\alpha$, the function $\kk^H_\alpha(x,y)$
 is convex in $x$ for fixed $y$ and vice versa, then the following holds: \\
 If $\alpha >1$, 
 \begin{align}
 H_\alpha^H(X_1+X_2|Y_1Y_2) &\geq \frac{1}{1-\alpha}\log\frac{(\delta_\alpha^H - K_\alpha^H(X_1|Y_1))(\delta_\alpha^H - K_\alpha^H(X_2|Y_2))}{1-\delta_\alpha^H} + \delta_\alpha^H  .
 \end{align}
  If $\alpha <1$, 
\begin{align}
 H_\alpha^H(X_1+X_2|Y_1Y_2) &\leq \frac{1}{1-\alpha}\log\frac{(\delta_\alpha^H - K_\alpha^H(X_1|Y_1))(\delta_\alpha^H - K_\alpha^H(X_2|Y_2))}{1-\delta_\alpha^H} + \delta_\alpha^H  .
 \end{align}
If $\kk^H_\alpha(x,y)$ is concave instead, the inequalities hold with $\leq$ and $\geq$ exchanged. These bounds are optimal, in the sense that equality is achieved by binary erasure channels. 
\end{thm}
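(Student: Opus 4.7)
The plan is to proceed in direct parallel to Theorem~\ref{thm:BEC-A}, with every ``A''-labeled object replaced by its ``H''-counterpart. The starting point is the convex-combination identity
\begin{align}
H_{\alpha}^{H}(X|Y) = \frac{1}{1-\alpha}\log\Bigl(\sum_y p(y)\, e^{(1-\alpha) H_{\alpha}^{H}(X|Y=y)}\Bigr),
\end{align}
together with the unconditioned R\'enyi addition formula $H_\alpha(X_1+X_2) = h_\alpha(h_\alpha^{-1}(H_\alpha(X_1))\ast h_\alpha^{-1}(H_\alpha(X_2)))$ applied pointwise in $(y_1,y_2)$. Using independence, these two facts rewrite $H_\alpha^H(X_1+X_2|Y_1Y_2)$ as $\tfrac{1}{1-\alpha}\log\sum_{y_1,y_2}p(y_1)p(y_2)\,\kk^H_\alpha\!\bigl(K_\alpha^H(X_1|Y_1{=}y_1),K_\alpha^H(X_2|Y_2{=}y_2)\bigr)$, which is the object on which I need an inequality.

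The key step is to replace the convex function $\kk^H_\alpha(\cdot,y)$ by the chord joining its values at the two endpoints of its domain $[1,\delta_\alpha^H]$. Evaluating at $x=\delta_\alpha^H$ uses $(k_\alpha^H)^{-1}(\delta_\alpha^H)=\tfrac12$, so that $\tfrac12\ast (k_\alpha^H)^{-1}(y)=\tfrac12$ and hence $\kk^H_\alpha(\delta_\alpha^H,y)=\delta_\alpha^H$; evaluating at $x=1$ uses $(k_\alpha^H)^{-1}(1)=0$, which gives $\kk^H_\alpha(1,y)=k_\alpha^H((k_\alpha^H)^{-1}(y))=y$. Convexity in each argument therefore delivers
\begin{align}
\kk^H_\alpha(x,y)\;\leq\;\frac{(\delta_\alpha^H-x)(\delta_\alpha^H-y)}{1-\delta_\alpha^H} + \delta_\alpha^H.
\end{align}

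Substituting this chord bound into the expression above, using independence of $Y_1$ and $Y_2$ to factor the sum via $\sum_{y_i}p(y_i)K_\alpha^H(X_i|Y_i{=}y_i)=K_\alpha^H(X_i|Y_i)$, yields the desired right-hand side \emph{inside} the logarithm. The direction of the final inequality then depends on the sign of the prefactor $\tfrac{1}{1-\alpha}$: for $\alpha>1$ it is negative so the interior upper bound flips to a lower bound on $H_\alpha^H$, and for $\alpha<1$ it is positive and the direction is preserved. Replacing the chord bound by its reverse when $\kk^H_\alpha$ is concave swaps the inequality once more, covering all four cases of the statement. Optimality is automatic, since equality in the chord bound forces $K_\alpha^H(X_i|Y_i{=}y_i)\in\{1,\delta_\alpha^H\}$ almost surely, which is exactly the signature of a binary erasure channel.

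I do not anticipate a genuine obstacle: the proof is structurally identical to that of Theorem~\ref{thm:BEC-A}, with only the prefactor $\tfrac{\alpha}{1-\alpha}\mapsto\tfrac{1}{1-\alpha}$ and the constant $\delta_\alpha^A\mapsto\delta_\alpha^H=2^{1-\alpha}$ changed. The single point worth double-checking is the endpoint computation $\kk^H_\alpha(1,y)=y$ under the convention that $(k_\alpha^H)^{-1}$ is taken on its principal branch $[0,\tfrac12]$, but this is immediate from the definition of $k_\alpha^H$.
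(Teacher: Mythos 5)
Your proposal is correct and follows exactly the route the paper intends: the paper's proof of this theorem is literally ``analogous to Theorem~\ref{thm:BEC-A},'' and your argument is that analogue, with the correct endpoint values $\kk^H_\alpha(\delta_\alpha^H,y)=\delta_\alpha^H$, $\kk^H_\alpha(1,y)=y$, the chord bound, the averaging step via $\sum_y p(y)K_\alpha^H(X|Y{=}y)=K_\alpha^H(X|Y)$, and the sign analysis of the prefactor $\tfrac{1}{1-\alpha}$. No gaps.
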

 \begin{proof}
The proof works analogous to that of Theorem~\ref{thm:BEC-A}. 
\end{proof}

 Let's now consider the case of $\kk^{H}_\alpha$. 
\begin{lem}\label{lem:kdown}
The function
 \begin{align}
 k_{\alpha}^H({k_\alpha^H}^{-1}(x)\ast {k_\alpha^H}^{-1}(y)) \label{kfunc3}
 \end{align}
 is convex for $0<\alpha<1$ and $2 < \alpha\leq 3$ and concave for $1<\alpha\leq 2$ and $\alpha\geq 3$. 
\end{lem}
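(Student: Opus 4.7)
The plan is to reduce the claim to monotonicity of a single one-variable auxiliary function, compute that function in closed form via a hyperbolic substitution, and then finish the sign analysis by an elementary inequality for $\sinh$.

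First, by symmetry around $p = 1/2$ I may restrict to $u := 1 - 2p \in [0,1]$. Setting $g(u) := (1+u)^\alpha + (1-u)^\alpha$ yields $k^H_\alpha(p) = 2^{-\alpha} g(u)$, and since binary convolution corresponds to multiplication, i.e.\ $p \ast q \leftrightarrow uv$, we have $\kk^H_\alpha(x,y) = 2^{-\alpha} g(uv)$. Fixing $y$ (equivalently $v \in [0,1]$), a short chain-rule computation in $t := g^{-1}(2^\alpha x)$ gives
\begin{equation*}
\partial_x^2 \kk^H_\alpha(x,y) \;=\; \frac{2^\alpha \, v\, g'(vt)\,[F(vt) - F(t)]}{t\, g'(t)^2}, \qquad F(r) := \frac{r\, g''(r)}{g'(r)}.
\end{equation*}
Since $vt \leq t$, the sign of $\partial_x^2\kk^H_\alpha$ is controlled by the monotonicity of $F$ on $[0,1)$, together with the sign of $g'(vt)$ (positive for $\alpha > 1$, negative for $0 < \alpha < 1$).

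Second, I compute $F$ in closed form using the substitution $r = \tanh(u)$. The identities $(1 \pm r)^\beta = e^{\pm \beta u}/\cosh^\beta(u)$ give
\begin{equation*}
F(r) \;=\; (\alpha - 1)\,\frac{\sinh(u)\,\cosh((\alpha-2)u)}{\sinh((\alpha-1)u)},
\end{equation*}
which in particular is identically $1$ at $\alpha \in \{2, 3\}$, matching the boundary behavior of the lemma. Computing $(\log F)'(u)$ and clearing the common denominator $\sinh(u)\cosh((\alpha-2)u)\sinh((\alpha-1)u)$, the numerator is a hyperbolic polynomial which, upon repeated application of the product-to-sum identities together with the relation $\alpha - 1 = 1 + (\alpha-2)$, collapses to
\begin{equation*}
M(u) \;:=\; \tfrac{1}{2}\bigl[\sinh(2(\alpha-2)u) - (\alpha-2)\sinh(2u)\bigr].
\end{equation*}

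Finally, I analyze the sign of $M$. The elementary inequality
\begin{equation*}
\sinh(\beta x) - \beta \sinh(x) \text{ has the sign of } \beta(\beta^2 - 1) \qquad (x > 0),
\end{equation*}
which follows immediately from the Taylor series of $\sinh$, yields $M < 0$ precisely for $\alpha \in (0,1) \cup (2,3)$ and $M > 0$ precisely for $\alpha \in (1,2) \cup (3,\infty)$ (with $\beta = \alpha - 2$). Combining with the sign of the denominator (positive for $\alpha > 1$, negative for $0 < \alpha < 1$) gives the monotonicity of $F$ in each case, and the chain-rule reduction above translates this to the convexity/concavity asserted in the lemma; the boundary values $\alpha \in \{2, 3\}$ correspond to $F \equiv 1$ and hence to $\kk^H_\alpha(\cdot, y)$ being affine.

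The main obstacle is noticing that the apparently messy expression $(\log F)'(u)$ collapses to the compact form $M$: this happens because the three hyperbolic arguments $u$, $(\alpha-2)u$, $(\alpha-1)u$ are linked by $(\alpha-1) = 1 + (\alpha-2)$, synchronizing perfectly with the hyperbolic angle-addition formula. Once this collapse is identified, the sign analysis is routine.
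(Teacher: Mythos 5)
Your proposal is correct, and it takes a genuinely different route from the paper's. The paper works with the tangent-line characterization of convexity: after reparametrizing $x\rightarrow k^H_\alpha(x)$ it reduces the claim to showing that an auxiliary two-parameter function $g^x_\alpha(y)$ has a minimum or maximum at $y=x$, and then settles the sign by inspecting the first and second derivatives of $f(x)=x^{\alpha-1}-(1-x)^{\alpha-1}$, noting the factors $(\alpha-1)$ and $(\alpha-1)(\alpha-2)$ and the exponent $\alpha-3$ — a step the paper leaves rather informal. You instead exploit the substitution $u=1-2p$, under which binary convolution becomes multiplication of the $u$-parameters, compute the second derivative directly, and reduce everything to the monotonicity of $F(r)=r g''(r)/g'(r)$; the hyperbolic substitution $r=\tanh(w)$ then yields the closed form $F=(\alpha-1)\sinh(w)\cosh((\alpha-2)w)/\sinh((\alpha-1)w)$, whose logarithmic derivative collapses (I verified the hyperbolic algebra) to $\tfrac12[\sinh(2(\alpha-2)w)-(\alpha-2)\sinh(2w)]$, and the sign is read off from the Taylor-series inequality for $\sinh(\beta x)-\beta\sinh(x)$, giving the sign $(\alpha-2)(\alpha-1)(\alpha-3)$ and hence exactly the transitions at $\alpha=1,2,3$. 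What your approach buys is a fully explicit and self-contained sign analysis where the paper's argument is sketchiest, plus the linearity at $\alpha\in\{2,3\}$ (Lemma~\ref{lem:kH-linear}) falling out for free as $F\equiv 1$; what the paper's approach buys is avoiding second derivatives altogether (no regularity or boundary issues at $g'(0)=0$, though in your argument these occur only at the endpoint $t=0$ and are harmless by continuity). One presentational nit: you reuse the symbol $u$ both for $1-2p$ and for the hyperbolic angle in $r=\tanh(u)$; distinct symbols would avoid confusion, but the argument itself is sound.
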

\begin{proof}
The proof is part of Appendix~\ref{Appendix1}. 
\end{proof}
Although technically included in the previous lemma, a particular case deserves some special attention. 
\begin{lem}\label{lem:kH-linear}
The function
 \begin{align}
 k_{\alpha}^H({k_\alpha^H}^{-1}(x)\ast {k_\alpha^H}^{-1}(y)) \label{kfunc3}
 \end{align}
 is linear in $x$ and $y$ for $\alpha=2$ and $\alpha = 3$. 
\end{lem}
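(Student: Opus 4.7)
The plan is to exploit the elementary identity $1 - 2(p \ast q) = (1-2p)(1-2q)$ together with the observation that, for exactly these two values of $\alpha$, the map $k_\alpha^H(p) = p^\alpha + (1-p)^\alpha$ happens to be an affine function of $(1-2p)^2$.

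First I would compute $k_\alpha^H(p)$ explicitly. For $\alpha = 2$ one has $p^2 + (1-p)^2 = 1 - 2p(1-p) = \tfrac{1}{2} + \tfrac{1}{2}(1-2p)^2$. For $\alpha = 3$ the cubes telescope via $p + (1-p) = 1$ to give $p^3 + (1-p)^3 = 1 - 3p(1-p) = \tfrac{1}{4} + \tfrac{3}{4}(1-2p)^2$. Hence in both cases $k_\alpha^H(p) = A_\alpha + B_\alpha (1-2p)^2$ for explicit positive constants $A_\alpha, B_\alpha$, and the associated inverse relation is $(1-2p)^2 = (k_\alpha^H(p) - A_\alpha)/B_\alpha$, which is well defined on the whole range $[\delta_\alpha^H, 1]$.

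Second, squaring the identity $1 - 2(p \ast q) = (1-2p)(1-2q)$ yields $(1 - 2(p \ast q))^2 = (1-2p)^2(1-2q)^2$. Substituting into the formula from the first step gives
\begin{align*}
k_\alpha^H(p \ast q) = A_\alpha + B_\alpha (1-2p)^2 (1-2q)^2 = A_\alpha + \frac{(x - A_\alpha)(y - A_\alpha)}{B_\alpha}
\end{align*}
with $x = k_\alpha^H(p)$ and $y = k_\alpha^H(q)$. This is a bilinear polynomial in $(x,y)$, hence linear in $x$ for each fixed $y$ and vice versa, which is exactly the claim of the lemma.

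No step here is a genuine obstacle; the argument reduces to elementary algebra once one spots the right parametrisation. The only point worth flagging is why the mechanism fails for other exponents: in general $p^\alpha + (1-p)^\alpha$ is a polynomial of degree $\lfloor \alpha/2 \rfloor$ in $p(1-p)$, so it is affine in $(1-2p)^2$ exactly for $\alpha \in \{2,3\}$ among the integers. This is consistent with Lemma~\ref{lem:kdown}, where $\alpha = 2$ and $\alpha = 3$ appear as the boundaries between the convex and concave regimes of $\kk^H_\alpha$, and affine behaviour is precisely what one expects at such transition points.
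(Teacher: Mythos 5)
Your proof is correct, and it takes a more direct route than the paper. The paper obtains Lemma~\ref{lem:kH-linear} as a byproduct of its general convexity analysis in Appendix~\ref{Appendix1}: it reduces the first-order convexity condition to the statement that an auxiliary function $g_\alpha^x(y)$ has an extremum at $y=x$, and then checks that for $\alpha=2,3$ the ratio $\bigl((x\ast c)^{\alpha-1}-(1-x\ast c)^{\alpha-1}\bigr)/\bigl(x^{\alpha-1}-(1-x)^{\alpha-1}\bigr)$ collapses to $(1-2c)$ and that $k_\alpha^H(y\ast c)$ is affine in $y(1-y)$, making $g_\alpha^x$ constant in $y$ so that the convexity inequality degenerates to an equality. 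You instead produce the closed form $\kk_\alpha^H(x,y)=A_\alpha+(x-A_\alpha)(y-A_\alpha)/B_\alpha$ directly from the two elementary facts that $k_\alpha^H(p)=A_\alpha+B_\alpha(1-2p)^2$ for $\alpha\in\{2,3\}$ and that $1-2(p\ast q)=(1-2p)(1-2q)$; the same quadratic identities appear in the paper's computation of $k_2^H(y\ast c)$ and $k_3^H(y\ast c)$, but you bypass the derivative machinery entirely. Your version buys an explicit formula (which, reassuringly, coincides with the BEC chord $\delta_\alpha^H+(x-\delta_\alpha^H)(y-\delta_\alpha^H)/(1-\delta_\alpha^H)$ of Theorem~\ref{thm:BEC-H}, since $A_\alpha=\delta_\alpha^H$ and $B_\alpha=1-\delta_\alpha^H$ for $\alpha=2,3$) and a clean explanation of why only these two exponents work; the paper's version buys nothing extra for this lemma but is embedded in the apparatus needed for the harder Lemma~\ref{lem:kdown}. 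One small wording point: the bilinear form is of course not jointly linear (it contains an $xy$ term), but separate linearity in each argument is exactly what the lemma asserts in the paper's convention, and you state this correctly.
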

\begin{proof}
Checked by explicit calculation, see Appendix~\ref{Appendix1}. 
\end{proof}
This lemma is remarkable as it tells us that the BSC-bound and the BEC-bound both hold with equality. We have
\begin{align}
 H_2^H(X_1+X_2|Y_1Y_2) &= h_2(h_2^{-1}( H_2^H(X_1|Y_1))\ast h_2^{-1}( H_2^H(X_2|Y_2))) \\[4pt]
  H_3^H(X_1+X_2|Y_1Y_2) &= h_3(h_3^{-1}( H_3^H(X_1|Y_1))\ast h_3^{-1}( H_3^H(X_2|Y_2))). 
\end{align}
These equations are noteworthy as they give an equality in the conditional case, something we usually only get for unconditioned entropies. 

To illustrate the results, we define in analogy to the technique used in Lemma~\ref{CE-A} the quantity
\begin{align}
\Delta^H(BSC(p),\alpha) := h(p\ast p)  -  \frac{1}{1-\alpha}\log\frac{(\delta_\alpha^H - K_\alpha^H(BSC(p)))^2}{1-\delta_\alpha^H} + \delta_\alpha^H,
\end{align}
which gives the difference between the BSC and the BEC bound evaluated on a binary symmetric channel with crossover probability $p$ and we plot it in Figure~\ref{Fig:E-H}. 

\begin{figure}[t!]
\centering
\begin{overpic}[scale=0.45]{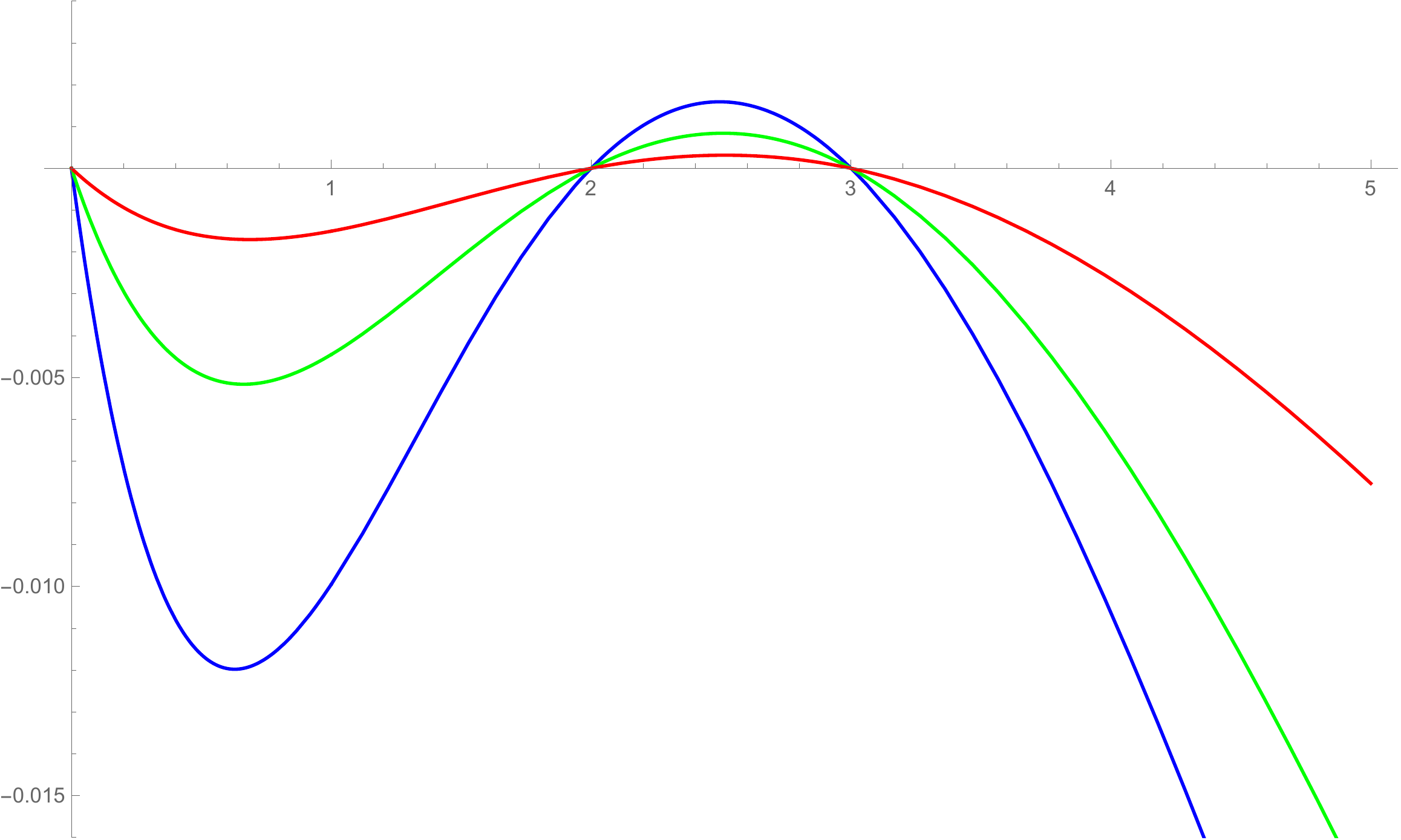}   
\put (-1,62) {$\Delta^H$}
\put (105,47) {$\alpha$}
\end{overpic}
\caption{\label{Fig:E-H} 
The different curves show $\Delta^H(BSC(p),\alpha)$ for different values of $p$. Blue is $p=0.25$,  green is $p=0.3$ and red is $p=0.35$. One can clearly observe how the BSC and the BEC bound switch their respective roles as upper and lower bound and that both are actually identical for $\alpha=2$ and $\alpha=3$. }
\end{figure}


\subsection{Bounds for $H_{\alpha}^{J}(X_1+X_2|Y_1Y_2)$} 

From the definition, we have
\begin{align}
H_\alpha^J(X|Y) &= \frac{1}{1-\alpha}\left[\log\sum_y p(y)^\alpha\sum_{x} p(x|y)^\alpha  - \log\sum_y p(y)^\alpha \right] \\
&= \frac{1}{1-\alpha}\left[\log\sum_y p(y)^\alpha e^{(1-\alpha) H_{\alpha}(X|Y=y) }  - \log\sum_y p(y)^\alpha \right]
\end{align}
This suggests that, again, the quantity 
 \begin{align}
  K_{\alpha}^{H}(X) &= e^{(1-\alpha) H_{\alpha}(X) } .
 \end{align}
 is of crucial interest here. However, due to the exponent $\alpha$ of $p(y)$ we can not directly invoke convexity, but have to play one extra trick compared to the proof of the previous bounds. 
 For the conditional version we will also need
  \begin{align}
  K_{\alpha}^{J}(X|Y) &= e^{(1-\alpha) H_{\alpha}^{J}(X|Y) } .
 \end{align}
 
 \begin{thm}[$H_{\alpha}^{J}$ BSC-bound]\label{thm:BSC-J}
If, for a given $\alpha$, the function $\kk^H_\alpha(x,y)$  is convex in $x$ for fixed $y$ and vice versa, then the following holds: \\
 If $\alpha >1$,
 \begin{align}
 H_\alpha^J(X_1+X_2|Y_1Y_2) \leq h_\alpha(h_\alpha^{-1}( H_\alpha^J(X_1|Y_1))\ast h_\alpha^{-1}( H_\alpha^J(X_2|Y_2))). \label{BSCupJ}
 \end{align}
  If $\alpha <1$,
 \begin{align}
 H_\alpha^J(X_1+X_2|Y_1Y_2) \geq h_\alpha(h_\alpha^{-1}( H_\alpha^J(X_1|Y_1))\ast h_\alpha^{-1}( H_\alpha^J(X_2|Y_2))). \label{BSCdownJ}
 \end{align}
If $\kk^H_\alpha(x,y)$ is concave instead, the inequalities hold with $\leq$ and $\geq$ exchanged. These bounds are optimal, in the sense that equality is achieved by binary symmetric channels. 
 \end{thm}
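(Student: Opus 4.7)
The plan is to follow the proof of Theorem~\ref{thm:BSC-A}, modified by the ``extra trick'' flagged just above the statement: the definition of $H_\alpha^J$ effectively weights each conditioning outcome by $p(y)^\alpha$ rather than $p(y)$, so these weights do not form a probability distribution and Jensen's inequality cannot be applied to them directly.

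First, using independence of $(X_1,Y_1)$ and $(X_2,Y_2)$, I would rewrite
\begin{align*}
H_\alpha^J(X_1+X_2|Y_1Y_2) = \frac{1}{1-\alpha}\log\sum_{y_1,y_2} q(y_1)\,q(y_2)\, K^H_\alpha(X_1+X_2|Y_1{=}y_1,Y_2{=}y_2),
\end{align*}
where $q(y_i) := p(y_i)^\alpha / \sum_{y'} p(y')^\alpha$ is now a genuine probability distribution. This rewriting relies on the observation that, under independence, $\sum_{y_1,y_2} p(y_1,y_2)^\alpha = \bigl(\sum_{y_1} p(y_1)^\alpha\bigr)\bigl(\sum_{y_2} p(y_2)^\alpha\bigr)$, which exactly cancels the normalization denominator in the definition of $H_\alpha^J$. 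For each fixed pair $(y_1,y_2)$ the unconditional R\'enyi combining identity gives $K^H_\alpha(X_1+X_2|Y_1{=}y_1,Y_2{=}y_2) = \kk^H_\alpha\bigl(K^H_\alpha(X_1|Y_1{=}y_1),\,K^H_\alpha(X_2|Y_2{=}y_2)\bigr)$.

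Next I would apply Jensen's inequality twice, once in each argument, using convexity of $\kk^H_\alpha$ together with the probability weights $q$, to obtain
\begin{align*}
\sum_{y_1,y_2} q(y_1) q(y_2)\, \kk^H_\alpha\bigl(K^H_\alpha(X_1|Y_1{=}y_1),\,K^H_\alpha(X_2|Y_2{=}y_2)\bigr) \geq \kk^H_\alpha\!\left(\sum_{y_1} q(y_1) K^H_\alpha(X_1|Y_1{=}y_1),\; \sum_{y_2} q(y_2) K^H_\alpha(X_2|Y_2{=}y_2)\right).
\end{align*}
A short calculation from the definitions shows $\sum_y q(y)\, K^H_\alpha(X|Y{=}y) = K^J_\alpha(X|Y)$, so each inner argument on the right collapses to $K^J_\alpha(X_i|Y_i)$. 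Applying $\frac{1}{1-\alpha}\log(\cdot)$ then produces $H_\alpha^J(X_1+X_2|Y_1Y_2)$ on the left and $h_\alpha\bigl(h_\alpha^{-1}(H^J_\alpha(X_1|Y_1))\ast h_\alpha^{-1}(H^J_\alpha(X_2|Y_2))\bigr)$ on the right.

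The sign of $\frac{1}{1-\alpha}$ then selects the direction of the inequality: for $\alpha>1$ the prefactor is negative and reverses the Jensen step, producing the upper bound~\eqref{BSCupJ}; for $\alpha<1$ it is positive and preserves the direction, producing~\eqref{BSCdownJ}. The concavity case and tightness at binary symmetric channels follow verbatim as in Theorem~\ref{thm:BSC-A}. I expect the only real subtlety to be the normalization step that turns $p(y)^\alpha$ into a genuine probability distribution $q(y)$, paired with the compensating identity $\sum_y q(y) K^H_\alpha(X|Y{=}y) = K^J_\alpha(X|Y)$ which converts the $H^H$-type Jensen output back into the desired $H^J$-type quantities; once these are in place the rest of the argument mirrors the $H^A$ proof line-by-line.
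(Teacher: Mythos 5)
Your proposal is correct and follows essentially the same route as the paper: the normalized weights $q(y)=p(y)^\alpha/\sum_{y'}p(y')^\alpha$ are exactly the tilted distribution $\tilde p$ the paper introduces, and the subsequent double application of Jensen's inequality together with the identity $\sum_y q(y)K^H_\alpha(X|Y{=}y)=K^J_\alpha(X|Y)$ mirrors the paper's chain of (in)equalities step for step. The one point you make more explicit than the paper --- that independence factorizes $\sum_{y_1,y_2}p(y_1,y_2)^\alpha$ so the normalizations cancel --- is a welcome clarification rather than a deviation.
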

 \begin{proof}
To prove the theorem, we will show Equation \ref{BSCupJ} in the case of $\kk^A_\alpha(x,y)$ being convex. All the other combinations, for \ref{BSCupJ} and \ref{BSCdownJ}, follow directly. The prove strategy is similar to the proof in the Shannon entropy setting. 

To make use of the convexity property, define the following probability distribution (also sometimes called the \textit{tilted probability distribution}):
\begin{align}
\tilde p(y) := \frac{p(y)^\alpha}{\sum_y p(y)^\alpha}. \label{tildepy}
\end{align}

Consider the following chain of equations: 
 \begin{align} 
 H_{\alpha}^{J}(X_1+X_2|Y_1Y_2) &= \frac{1}{1-\alpha}\left[\log\sum_{y_1,y_2} p(y_1)^\alpha p(y_2)^\alpha e^{(1-\alpha) H_{\alpha}(X_1+X_2|Y_1=y_1, Y_2=y_2) }  - \log\sum_{y_1,y_2} p(y_1)^\alpha p(y_2)^\alpha \right] \\
 &= \frac{1}{1-\alpha}\left[\log\sum_{y_1,y_2} \tilde p(y_1) \tilde p(y_2) e^{(1-\alpha) H_{\alpha}(X_1+X_2|Y_1=y_1, Y_2=y_2) } \right] \\
	&=\frac{\alpha}{1-\alpha}\log\left(\sum_{y_1,y_2} \tilde p(y_1)\tilde p(y_2) {k^H_{\alpha}}\left({k^H_\alpha}^{-1}(K_\alpha^H(X_1|Y_1=y_1))\ast {k^H_\alpha}^{-1}(K_\alpha^H(X_2|Y_2=y_2)) \right) \right)	\\
	&\leq\frac{\alpha}{1-\alpha}\log\left({k^H_{\alpha}}\left({k^H_\alpha}^{-1}\left(\sum_{y_1} \tilde p(y_1)K_\alpha^H(X_1|Y_1=y_1)\right)\ast {k^H_\alpha}^{-1}\left(\sum_{y_2} \tilde p(y_2)K_\alpha^H(X_2|Y_2=y_2)\right) \right) \right)	 \\
	&= h_{\alpha}(h_\alpha^{-1}(H_\alpha^J(X_1|Y_1))\ast h_\alpha^{-1}(H_\alpha^J(X_2|Y_2))),
\end{align}
where all equalities follow simply by definition and rearranging. The inequality follows by using the convexity of $\kk^H_\alpha(x,y)$ twice, once in the first argument and once in the second. 
\end{proof} 

\begin{thm}[$H_{\alpha}^{J}$ BEC-bound]\label{thm:BEC-J}
 If, for a given $\alpha$, the function $\kk^H_\alpha(x,y)$
 is convex in $x$ for fixed $y$ and vice versa, then the following holds: \\
 If $\alpha >1$, 
 \begin{align}
 H_\alpha^J(X_1+X_2|Y_1Y_2) &\geq \frac{1}{1-\alpha}\log\frac{(\delta_\alpha^H - K_\alpha^J(X_1|Y_1))(\delta_\alpha^H - K_\alpha^J(X_2|Y_2))}{1-\delta_\alpha^H} + \delta_\alpha^H  .
 \end{align}
  If $\alpha <1$,
\begin{align}
 H_\alpha^J(X_1+X_2|Y_1Y_2) &\leq \frac{1}{1-\alpha}\log\frac{(\delta_\alpha^H - K_\alpha^J(X_1|Y_1))(\delta_\alpha^H - K_\alpha^J(X_2|Y_2))}{1-\delta_\alpha^H} + \delta_\alpha^H  .
 \end{align}
If $\kk^H_\alpha(x,y)$ is concave instead, the inequalities hold with $\leq$ and $\geq$ exchanged. These bounds are optimal, in the sense that equality is achieved by binary erasure channels. 
\end{thm}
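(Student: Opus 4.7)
My plan is to combine the tilted-distribution trick introduced in the proof of Theorem~\ref{thm:BSC-J} with the secant-line linearization used in the proof of Theorem~\ref{thm:BEC-H}. I would establish the bound for the case $\alpha>1$ with $\kk^H_\alpha$ convex; the remaining three combinations (the case $\alpha<1$, and the concave case in each regime) follow by reversing the direction of either the linear bound or of the factor $1/(1-\alpha)$, or both, exactly as in the analogous cases of the earlier theorems.

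The first step is to rewrite $H_\alpha^J(X_1+X_2|Y_1Y_2)$ in terms of the tilted distribution $\tilde p(y_i) := p(y_i)^\alpha/\sum_{y_i} p(y_i)^\alpha$ from~\eqref{tildepy}, exactly as in the first lines of the chain of equations inside the proof of Theorem~\ref{thm:BSC-J}, to obtain
\begin{align}
H_\alpha^J(X_1+X_2|Y_1Y_2) = \frac{1}{1-\alpha}\log\sum_{y_1,y_2}\tilde p(y_1)\tilde p(y_2)\,\kk^H_\alpha\!\left(K^H_\alpha(X_1|Y_1{=}y_1),\,K^H_\alpha(X_2|Y_2{=}y_2)\right).
\end{align}
Next, I would apply the secant-line upper bound $\kk^H_\alpha(x,y)\leq (\delta_\alpha^H-x)(\delta_\alpha^H-y)/(1-\delta_\alpha^H)+\delta_\alpha^H$, justified in exactly the same way as in the proof of Theorem~\ref{thm:BEC-H} from the joint convexity hypothesis together with the endpoint evaluations $\kk^H_\alpha(\delta_\alpha^H,y)=\delta_\alpha^H$ and $\kk^H_\alpha(1,y)=y$. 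The third step uses independence of $(X_1,Y_1)$ from $(X_2,Y_2)$ to factor the remaining double sum into a product of single sums; the identity that ties things together is
\begin{align}
\sum_{y_i}\tilde p(y_i)K^H_\alpha(X_i|Y_i{=}y_i) = K^J_\alpha(X_i|Y_i),
\end{align}
which falls directly out of the definition of $H_\alpha^J$ when written via the tilted distribution, and is the very reason for introducing $\tilde p$ in the first place.

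Pulling the resulting estimate through the $1/(1-\alpha)$ prefactor flips the inequality for $\alpha>1$ and preserves it for $\alpha<1$, which produces the stated bounds; swapping convexity for concavity turns the secant bound into a lower bound on $\kk^H_\alpha$ and reverses the conclusion accordingly. The only real technical care I expect to need is in tracking these sign flips when passing through the logarithm and making sure the factorization through independence lines up with the $K^J_\alpha$ identity above — no new analytic ingredient beyond those already used in Theorems~\ref{thm:BSC-J} and~\ref{thm:BEC-H} is required. Optimality at the binary erasure channel is then immediate, since for a BEC the conditional values $K^H_\alpha(X_i|Y_i{=}y_i)$ take only the two extremal values $1$ and $\delta_\alpha^H$, at which the secant bound is attained with equality.
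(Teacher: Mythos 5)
Your proposal is correct and follows exactly the route the paper takes, which simply combines the secant-line argument of Theorem~\ref{thm:BEC-A} with the tilted distribution $\tilde p$ from the proof of Theorem~\ref{thm:BSC-J}; you even make explicit the key identity $\sum_y \tilde p(y)K^H_\alpha(X|Y{=}y)=K^J_\alpha(X|Y)$ that the paper leaves implicit. No gaps.
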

 \begin{proof}
The proof works analogous to that of Theorem~\ref{thm:BEC-A} using the tilted probability distribution from the proof of Theorem~\ref{thm:BSC-J}.  
\end{proof}

Now the exact behavior of the bounds follows, as it did for $H_{\alpha}^{H}(X_1+X_2|Y_1Y_2)$, from Lemma~\ref{lem:kdown}. 
   
\subsection{Bounds for $H_{\alpha}^{C}(X_1+X_2|Y_1Y_2)$} 

Finally, we will investigate the properties of $H_{\alpha}^{C}$. On first glance, this might appear simpler then the other quantities since the sum over $y$ is outside the logarithm, which means that
\begin{align}
H_\alpha^C(X|Y) &= \sum_y p(y) H_\alpha(X|Y=y)
\end{align}
 and therefore the quantity to investigate will be
 \begin{align}
 \hh_\alpha(x,y) = {h_{\alpha}}\left({h_\alpha}^{-1}(x) \ast {h_\alpha}^{-1}(y) \right).   \label{hfunc}
 \end{align}
 However, it will turn out that proving convexity or concavity of this quantity is actually more complicated. Nevertheless, we will provide some partial results later in this section. First, let us state the BSC and BEC bound for this conditional entropy. 

  \begin{thm}[$H_{\alpha}^{C}$ BSC-bound]\label{thm:BSC-C}
If, for a given $\alpha$, the function $\hh_\alpha(x,y)$  is convex in $x$ for fixed $y$ and vice versa, then the following holds: \\
 If $\alpha >1$, 
 \begin{align}
 H_\alpha^C(X_1+X_2|Y_1Y_2) \leq h_\alpha(h_\alpha^{-1}( H_\alpha^C(X_1|Y_1))\ast h_\alpha^{-1}( H_\alpha^C(X_2|Y_2))). \label{BSCupC}
 \end{align}
  If $\alpha <1$, 
 \begin{align}
 H_\alpha^C(X_1+X_2|Y_1Y_2) \geq h_\alpha(h_\alpha^{-1}( H_\alpha^C(X_1|Y_1))\ast h_\alpha^{-1}( H_\alpha^C(X_2|Y_2))). \label{BSCdownC}
 \end{align}
If $\hh_\alpha(x,y)$ is concave instead, the inequalities hold with $\leq$ and $\geq$ exchanged. These bounds are optimal, in the sense that equality is achieved by binary symmetric channels. 
 \end{thm}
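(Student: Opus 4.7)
The plan is to give a direct proof by two applications of Jensen's inequality, exploiting the fact that $H_\alpha^C(X|Y) = \sum_y p(y)\, H_\alpha(X|Y=y)$ is itself a probability-weighted average of unconditioned R\'enyi entropies. Unlike for $H_\alpha^A$ and $H_\alpha^H$, no outer logarithm or $K$-transformation separates the conditional entropy from its pointwise values, so no sign-flipping coefficient has to be tracked and the argument is correspondingly shorter than the proofs of Theorems~\ref{thm:BSC-A} and~\ref{thm:BEC-A}.

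Concretely, I would first use the independence of $(X_1,Y_1)$ and $(X_2,Y_2)$ to split the outer expectation,
\begin{align*}
H_\alpha^C(X_1+X_2|Y_1Y_2) = \sum_{y_1,y_2} p(y_1)p(y_2)\, H_\alpha(X_1+X_2|Y_1=y_1, Y_2=y_2),
\end{align*}
and then, for each fixed pair $(y_1,y_2)$, use that $X_1|Y_1=y_1$ and $X_2|Y_2=y_2$ are independent binary random variables so that the unconditional R\'enyi identity of Section~\ref{sec:Renyi} gives
\begin{align*}
H_\alpha(X_1+X_2|Y_1=y_1,Y_2=y_2) = \hh_\alpha\bigl( H_\alpha(X_1|Y_1=y_1),\, H_\alpha(X_2|Y_2=y_2) \bigr).
\end{align*}

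With this rewriting in hand, I would apply Jensen's inequality to $\hh_\alpha$ twice --- once by averaging over $y_1$ with $y_2$ held fixed, and once by averaging over $y_2$ --- invoking convexity in each slot separately. The two averages then move inside $\hh_\alpha$ and, by the definition of $H_\alpha^C$, reassemble into $H_\alpha^C(X_1|Y_1)$ and $H_\alpha^C(X_2|Y_2)$, producing the claimed bound in the direction dictated by the convexity hypothesis. The $\alpha>1$ versus $\alpha<1$ split and the convex-versus-concave dichotomy in the statement are just bookkeeping for which way Jensen's inequality points in each regime. Optimality on BSCs is immediate: for a pair of BSCs the conditioned distribution of $X_i|Y_i=y_i$ is independent of the realization of $Y_i$, so both Jensen steps are saturated.

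The real obstacle in this subsection is not this theorem but the separate question, taken up afterwards, of identifying those $\alpha$ for which $\hh_\alpha$ is actually convex or concave in each argument. Because we are forced to work directly in entropy space --- with no intermediate $K$-like variable on which to invoke a previously established convexity lemma, as was possible in the $A$ and $H$ settings --- this check is substantially more delicate and is the reason only partial results will be stated for the Cachin entropy in what follows.
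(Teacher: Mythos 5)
Your decomposition and double application of Jensen's inequality is exactly the argument the paper intends --- its proof of Theorem~\ref{thm:BSC-C} consists only of the remark that it is analogous to Theorem~\ref{thm:BSC-A}, and your write-up is that analogy carried out, correctly simplified by the observation that $H_\alpha^C$ is already an arithmetic average of pointwise entropies, so no $K$-type change of variable is needed. The splitting over $(y_1,y_2)$, the pointwise identity $H_\alpha(X_1+X_2|Y_1=y_1,Y_2=y_2)=\hh_\alpha\bigl(H_\alpha(X_1|Y_1=y_1),H_\alpha(X_2|Y_2=y_2)\bigr)$, the two slot-wise Jensen steps, and the saturation argument for BSCs are all sound.

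The one place where your proposal does not deliver the theorem as literally stated is the sentence dismissing the $\alpha>1$ versus $\alpha<1$ split as ``bookkeeping for which way Jensen's inequality points.'' In the $A$ and $H$ settings that split genuinely comes from the sign of the prefactor $\tfrac{\alpha}{1-\alpha}$ (resp.\ $\tfrac{1}{1-\alpha}$) sitting outside the averaged quantity, which reverses the inequality after the Jensen step. For $H_\alpha^C$ there is no such prefactor: your argument gives
\begin{align*}
H_\alpha^C(X_1+X_2|Y_1Y_2)\;\geq\; h_\alpha\bigl(h_\alpha^{-1}(H_\alpha^C(X_1|Y_1))\ast h_\alpha^{-1}(H_\alpha^C(X_2|Y_2))\bigr)
\end{align*}
whenever $\hh_\alpha$ is convex in each slot, for \emph{every} $\alpha$, and the reverse inequality whenever it is concave --- the direction depends only on convexity versus concavity, never on whether $\alpha$ exceeds $1$. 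That matches the stated theorem for $\alpha<1$ but is the opposite of the stated $\leq$ for $\alpha>1$ under the convexity hypothesis. You should either identify a mechanism that reverses the inequality at $\alpha=1$ (there is none in your argument) or state explicitly that the direct proof yields the $\alpha$-independent version, so that the $\alpha$-split in the statement is inherited from the $A$/$H$ templates rather than produced by the proof. As written, a reader checking your ``bookkeeping'' claim against the Jensen step will find a mismatch.
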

 \begin{proof}
The proof works analogous to that of Theorem~\ref{thm:BSC-A}. 
\end{proof}
 \begin{thm}[$H_{\alpha}^{C}$ BEC-bound]\label{thm:BEC-C}
 If, for a given $\alpha$, the function $\hh_\alpha(x,y)$
 is convex in $x$ for fixed $y$ and vice versa, then the following holds: \\
 If $\alpha >1$, 
 \begin{align}
 H_\alpha^C(X_1+X_2|Y_1Y_2) &\geq \log2 - \frac{(\log2 - H_\alpha^C(X_1|Y_1))(\log2 - H_\alpha^C(X_2|Y_2))}{\log2}.
 \end{align}
  If $\alpha <1$, 
\begin{align}
 H_\alpha^C(X_1+X_2|Y_1Y_2) &\leq \log2 - \frac{(\log2 - H_\alpha^C(X_1|Y_1))(\log2 - H_\alpha^C(X_2|Y_2))}{\log2}.
 \end{align}
If $\hh_\alpha(x,y)$ is concave instead, the inequalities hold with $\leq$ and $\geq$ exchanged. These bounds are optimal, in the sense that equality is achieved by binary erasure channels. 
\end{thm}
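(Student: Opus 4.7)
The plan is to mirror the proof of Theorem~\ref{thm:BEC-A}, but in a technically simpler setting. The crucial observation is that, unlike for $H_{\alpha}^{A}$, $H_{\alpha}^{H}$, or $H_{\alpha}^{J}$, one has $H_{\alpha}^{C}(X|Y) = \sum_{y} p(y)\, H_{\alpha}(X|Y=y)$ literally as a convex combination of unconditional R\'enyi entropies, so no auxiliary $K$-quantity is needed and no outer $\frac{1}{1-\alpha}\log(\cdot)$ transformation has to be tracked; any pointwise bound on $\hh_{\alpha}$ can be pushed through the outer average directly.

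First I would determine the boundary values of $\hh_{\alpha}$ on the square $[0,\log 2]^{2}$. Since $h_{\alpha}^{-1}(0)\in\{0,1\}$ and $p \ast q = q$ when $p=0$, one obtains $\hh_{\alpha}(0,y) = y$; and since $h_{\alpha}^{-1}(\log 2) = 1/2$ and $\tfrac{1}{2}\ast q = \tfrac{1}{2}$, one obtains $\hh_{\alpha}(\log 2, y) = \log 2$, with symmetric statements in the second argument. If $\hh_{\alpha}$ is convex in $x$ for fixed $y$ and vice versa, the chord through these endpoints dominates the function, giving the pointwise estimate
\begin{equation*}
\hh_{\alpha}(x,y) \leq \log 2 - \frac{(\log 2 - x)(\log 2 - y)}{\log 2},
\end{equation*}
while concavity reverses the inequality.

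I would then insert this estimate into the expectation decomposition
\begin{equation*}
H_{\alpha}^{C}(X_1+X_2|Y_1Y_2) = \sum_{y_1,y_2} p(y_1)p(y_2)\, \hh_{\alpha}\bigl(H_{\alpha}(X_1|Y_1=y_1),\, H_{\alpha}(X_2|Y_2=y_2)\bigr)
\end{equation*}
and use linearity of the outer sum, together with $H_{\alpha}^{C}(X_i|Y_i) = \sum_{y_i} p(y_i)H_{\alpha}(X_i|Y_i=y_i)$, to collapse the resulting expression into the right-hand side of the theorem; the sign of the inequality as a function of $\alpha$ above or below $1$ and of convex versus concave is tracked exactly as in the proofs of Theorems~\ref{thm:BEC-A} and~\ref{thm:BEC-H}. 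Optimality for the binary erasure channel is immediate: the conditional entropies $H_{\alpha}(X_i|Y_i=y_i)$ of a BEC take only the extremal values $0$ (non-erasure) and $\log 2$ (erasure), at which the chord coincides with $\hh_{\alpha}$, so every summand is achieved with equality.

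I do not foresee a genuine obstacle in executing these steps; the mechanics are routine and parallel the earlier BEC theorems. The real difficulty, which the paper has already flagged immediately before the theorem, is external to this particular proof: determining the values of $\alpha$ for which $\hh_{\alpha}$ actually enjoys the required convexity or concavity property is genuinely harder than for the $K$-based quantities in the previous subsections, and must be addressed separately.
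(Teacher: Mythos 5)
Your mechanical setup is exactly the paper's intended argument (the paper only says the proof is ``analogous to Theorem~\ref{thm:BEC-A}''): the endpoint values $\hh_\alpha(0,y)=y$ and $\hh_\alpha(\log 2,y)=\log 2$ are correct, the chord bound
\begin{equation*}
\hh_{\alpha}(x,y) \le \log 2 - \frac{(\log 2 - x)(\log 2 - y)}{\log 2}
\end{equation*}
under convexity is correct, the bilinearity of the right-hand side lets it pass through the product average, and the BEC equality argument at the corners of $[0,\log 2]^2$ is fine.

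The genuine gap is in your last step of ``sign bookkeeping,'' and you have in fact set up a contradiction with your own opening observation. You correctly note that $H_\alpha^C(X|Y)=\sum_y p(y)H_\alpha(X|Y=y)$ is a plain convex combination with \emph{no} outer $\tfrac{1}{1-\alpha}\log(\cdot)$ to track. But in Theorems~\ref{thm:BEC-A} and~\ref{thm:BEC-H} the $\alpha>1$ versus $\alpha<1$ reversal of the inequality comes \emph{precisely} from the sign of the prefactor $\tfrac{\alpha}{1-\alpha}$ (resp.\ $\tfrac{1}{1-\alpha}$) in front of that outer logarithm. Since that prefactor is absent here, the pointwise chord bound passes through the linear average with its direction unchanged: convexity of $\hh_\alpha$ yields
\begin{equation*}
H_\alpha^C(X_1+X_2|Y_1Y_2) \le \log 2 - \frac{(\log 2 - H_\alpha^C(X_1|Y_1))(\log 2 - H_\alpha^C(X_2|Y_2))}{\log 2}
\end{equation*}
for \emph{every} $\alpha$ where the convexity holds, and concavity yields the reverse. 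This agrees with the theorem's $\alpha<1$ branch but is the opposite of its $\alpha>1$ branch (and note that the $\alpha<1$/convex conclusion must also be consistent with the Shannon limit $\alpha\to1$, where convexity of $\hh_1$ gives the BEC expression as an \emph{upper} bound). So the sentence ``the sign \ldots is tracked exactly as in the proofs of Theorems~\ref{thm:BEC-A} and~\ref{thm:BEC-H}'' is false, and as written your argument does not establish the $\alpha>1$ inequality of the statement; to get $\ge$ there you would need concavity, not convexity. The discrepancy arguably originates in the theorem statement itself, which appears to have inherited the $\alpha$-split from the $K$-based entropies, but a proof of the statement as given cannot gloss over this point --- you must either derive the $\alpha>1$ direction honestly (which the chord argument will not do under the stated convexity hypothesis) or flag that the conclusion should not depend on whether $\alpha$ is above or below $1$.
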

\begin{proof}
The proof works analogous to that of Theorem~\ref{thm:BEC-A}.  
\end{proof}

 \begin{lem}[Counterexample for $\hh_\alpha(x,y)$]\label{CE-C}
 For $\alpha\in(1.387, 1.95)$, the function
  \begin{align}
 h_{\alpha}\left({h_\alpha}^{-1}(x)\ast {h_\alpha}^{-1}(y)\right)
 \end{align}
 is neither convex nor concave in $x$ for fixed $y$ and vice versa. 
 \end{lem}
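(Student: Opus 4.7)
The strategy mirrors the one used in Lemma~\ref{CE-A} for $\kk^A_\alpha$. Convexity (respectively concavity) of $\hh_\alpha(x,y)$ separately in each argument would, via Theorems~\ref{thm:BSC-C} and~\ref{thm:BEC-C}, force a fixed ordering between the BSC-type bound and the BEC-type bound on $H_\alpha^C(X_1+X_2|Y_1Y_2)$, uniformly over all underlying channels. Hence, exhibiting two specific channels whose associated ordering is opposite rules out both convexity and concavity on the $\alpha$-interval in question.

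To make this quantitative, I would specialize to a pair of binary symmetric channels. For a $BSC(p)$ with uniform input the posterior of $X$ given either output symbol is $\{p,1-p\}$, so $H_\alpha^C(BSC(p))=h_\alpha(p)$, and the true value of $H_\alpha^C(X_1+X_2|Y_1Y_2)$ for two independent copies reduces to $h_\alpha(p\ast p)$ (using the symmetry $h_\alpha(q)=h_\alpha(1-q)$, which makes all four possible posteriors on $X_1+X_2$ equivalent). Define the gap between this attained value and the BEC-type bound by
\begin{align}
\Delta^C(BSC(p),\alpha) := h_\alpha(p\ast p) - \log 2 + \frac{(\log 2 - h_\alpha(p))^2}{\log 2}.
\end{align}
If $\hh_\alpha$ were convex in each argument for some $\alpha>1$, Theorems~\ref{thm:BSC-C} and~\ref{thm:BEC-C} would sandwich the true conditional entropy between the BEC expression (below) and the BSC expression (above), forcing $\Delta^C(BSC(p),\alpha)\geq 0$ for every $p$; concavity would flip the sandwich and force $\Delta^C\leq 0$. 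It therefore suffices to produce two values $p_1\neq p_2$ for which $\Delta^C$ carries opposite signs throughout the target interval $\alpha\in(1.387,1.95)$.

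Guided by the analogous counterexamples in Lemma~\ref{CE-A}, I would pick $p_1$ very close to $0$ (an almost noiseless channel) and $p_2$ close to $1/2$ (an almost useless channel), and numerically verify that $\Delta^C(BSC(p_1),\alpha)$ and $\Delta^C(BSC(p_2),\alpha)$ are of opposite signs throughout $(1.387,1.95)$. The main obstacle is not conceptual but numerical: near the extremes of $p$ both $h_\alpha(p)$ and $\log 2 - h_\alpha(p)$ become small, so the cancellations inside $\Delta^C$ are delicate and high-precision arithmetic is required in order to certify the signs rigorously across the entire $\alpha$-interval — exactly the caveat raised in Remark~\ref{rem:CE-A} for the Arimoto case. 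A plot analogous to Figure~\ref{Fig:CE-A} would accompany the statement to visualize the sign flip and confirm that the specific endpoints $1.387$ and $1.95$ are precisely the values that can be conclusively verified with the chosen precision.
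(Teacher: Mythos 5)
Your proposal is correct and follows essentially the same route as the paper: the paper's proof likewise reduces the claim to the sign of $\Delta^C(BSC(p),\alpha) = h(p\ast p) - \log 2 + \frac{(\log 2 - h_\alpha(p))^2}{\log 2}$ and numerically verifies that $\Delta^C(BSC(10^{-7}),\alpha)<0$ on $(1,1.95)$ while $\Delta^C(BSC(0.49),\alpha)>0$ on $(1.387,2)$, exactly the near-noiseless and near-useless choices you anticipate. The only (cosmetic) difference is that you write the first term as $h_\alpha(p\ast p)$, which is the consistent form given that the BSC bound is $h_\alpha(p\ast p)$.
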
 
 \begin{proof}
We will use the same strategy as for the counterexamples found in Lemma~\ref{CE-A}

For a binary symmetric channel with crossover probability $p$, denoted $BSC(p)$, we have 
 \begin{align}
 H_{\alpha}^{C}(BSC(p)) = h_\alpha(p)
 \end{align}
 and therefore the BSC bound for two of these channels is given by $h_\alpha(p\ast p)$. Evaluating the BEC bound for two $BSC(p)$ and defining the difference between both as
 \begin{align}
 \Delta^C(BSC(p),\alpha) := h(p\ast p)  -  \log2 + \frac{(\log2 - h_\alpha(p))^2}{\log2}
 \end{align}
 gives us an easily testable quantity. 
 We simply verify numerically that 
 \begin{align}
 \Delta^C(BSC(10^{-7}), \alpha) < 0 \quad\text{for}\; \alpha\in(1,1.95), \\
 \Delta^C(BSC(0.49), \alpha) > 0 \quad\text{for}\; \alpha\in(1.387,2).
 \end{align}
 By the above argument this leads to the desired result. 
\end{proof}
  \begin{remark}\label{rem:CE-C}
Similar to remark~\ref{rem:CE-A}, also here, numerics suggests that the same methods could provide counterexamples in the range $\alpha \in (1.3863,2)$ when evaluated with sufficiently high computer precision.
\end{remark} 
 We end this section with the following conjecture based on numerical evidence. 
 \begin{conjecture}
There exists a value $1 < \hat\alpha < 1.3863$, such that 
 \begin{align}
 h_{\alpha}\left({h_\alpha}^{-1}(x)\ast {h_\alpha}^{-1}(y)\right)
 \end{align}
 is convex for $0<\alpha<\hat\alpha$ and concave for $\alpha\geq 2$. 
\end{conjecture}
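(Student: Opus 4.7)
The plan is to reduce the convexity/concavity of $\hh_\alpha(x,y)$ in $x$ (with $y$ fixed) to a single sign condition. Setting $p = h_\alpha^{-1}(x)$ and $q = h_\alpha^{-1}(y)$ with $p,q \in [0, 1/2]$, and applying the chain rule twice together with $dp/dx = 1/h_\alpha'(p)$, one checks that $\partial^2 \hh_\alpha/\partial x^2$ has the same sign as
\begin{align*}
N_\alpha(p,q) := (1-2q)\, h_\alpha'(p)\, h_\alpha''(p \ast q) - h_\alpha'(p \ast q)\, h_\alpha''(p).
\end{align*}
Since $h_\alpha'$ and $h_\alpha''$ admit closed-form expressions in terms of $s_\alpha(p) := p^\alpha + (1-p)^\alpha$, the function $N_\alpha$ is fully explicit on $[0, 1/2]^2$, and the conjecture reduces to showing that there exists $\hat\alpha \in (1, 1.3863)$ such that $N_\alpha(p,q) \geq 0$ on $[0, 1/2]^2$ for $0 < \alpha < \hat\alpha$ and $N_\alpha(p,q) \leq 0$ there for $\alpha \geq 2$.

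For the convex regime, the Shannon case $\alpha = 1$ gives $N_1 \geq 0$ by Mrs.\ Gerber's Lemma; combined with continuity of $N_\alpha$ in $\alpha$ and compactness of the parameter domain, this yields a full interval $(1-\epsilon, 1+\epsilon)$ on which convexity holds. Extending downward to all $\alpha \in (0,1)$ can be attempted via a direct factorization of $N_\alpha$, exploiting that for $\alpha < 1$ the sign of $1/(1-\alpha)$ and the monotonicity of $p \mapsto p^{\alpha-1}$ co-operate in the right direction; extending upward to $\hat\alpha$ reduces to locating the first $\alpha > 1$ at which $N_\alpha$ touches zero in the interior of $[0, 1/2]^2$. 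For the concave regime, I would first establish the inequality at the anchor $\alpha = 2$, where $s_2(p) = 1 - 2p + 2p^2$ is polynomial and $N_2(p,q)$ reduces to a rational function amenable to direct manipulation. Extension to $\alpha > 2$ could then proceed by checking that $\partial_\alpha N_\alpha \leq 0$ along the zero set of $N_\alpha$, together with verifying the correct asymptotic sign as $\alpha \to \infty$, where the min-entropy limit makes $h_\alpha$ essentially piecewise linear and $N_\alpha$ vanishes identically by the analog of the Arimoto result for $k^A_\infty$.

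The main obstacle is that $N_\alpha$ does not appear to factor as a single-sign expression times a clean polynomial in $p$ and $q$. Unlike the analysis near $p = 1/2$ that powered Lemma~\ref{lem:kdown} for $\kk^H_\alpha$, here the mixture of terms evaluated at $p$ and at $p \ast q$ produces cross-terms whose joint sign is not manifest, and local Taylor expansions around natural basepoints (corners of $[0, 1/2]^2$ or the diagonal $p = q$) do not obviously propagate to global inequalities. I expect the hardest step to be ruling out interior sign changes uniformly in $q$ for intermediate $\alpha$; even pinning down the exact value of $\hat\alpha$, which the numerical threshold $1.3863 \approx \ln 4$ suggests might equal $\ln 4$, seems likely to require careful interval arithmetic rather than a conceptually clean identity.
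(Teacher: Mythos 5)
The first thing to note is that the paper does not prove this statement at all: it is stated explicitly as a conjecture supported only by numerical evidence, and the adjacent Lemma~\ref{CE-C} together with Remark~\ref{rem:CE-C} shows that no convexity or concavity can hold on (roughly) $\alpha\in(1.3863,2)$, which is where the threshold $1.3863$ comes from. So there is no paper proof to compare against, and the only question is whether your proposal closes the conjecture. It does not. What you have is a correct and useful reduction: the sign of $\partial^2 \hh_\alpha/\partial x^2$ is indeed governed by $N_\alpha(p,q)=(1-2q)\,h_\alpha'(p)\,h_\alpha''(p\ast q)-h_\alpha'(p\ast q)\,h_\alpha''(p)$ after dividing out $h_\alpha'(p)^3>0$ on $p\in(0,1/2)$. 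But everything after that reduction is a list of steps you yourself flag as unresolved (``can be attempted'', ``could then proceed'', ``I expect the hardest step to be\dots''), and the global sign analysis of $N_\alpha$ on $[0,1/2]^2$ is the entire content of the conjecture.

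Beyond being incomplete, two steps presented as settled are not. First, the claim that $N_1\geq 0$ plus continuity in $\alpha$ and compactness yields an interval $(1-\epsilon,1+\epsilon)$ of convexity requires a \emph{strict} uniform bound $N_1>0$ on the compact domain, which fails: $N_1$ vanishes on parts of the boundary (e.g.\ at $q=1/2$, where $\hh_1(\cdot,y)$ is constant, and as $p\to 1/2$), so continuity alone cannot rule out sign changes appearing arbitrarily close to $\alpha=1$. The existence of \emph{some} $\hat\alpha>1$ below which convexity persists is precisely the nontrivial claim. Second, the assertion that $N_\alpha$ vanishes identically as $\alpha\to\infty$ ``by the analog of the Arimoto result for $k^A_\infty$'' conflates $\hh_\alpha$ with $\kk^A_\alpha$: the linearity proved in the paper is for the Arimoto $K$-function, and a direct computation with $h_\infty(p)=-\log(1-p)$ for $p\leq 1/2$ shows that $h_\infty\bigl(h_\infty^{-1}(x)\ast h_\infty^{-1}(y)\bigr)$ is \emph{not} linear in $x$; indeed the paper conjectures strict concavity, not linearity, for large $\alpha$. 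In short, the reduction to $N_\alpha$ is sound and is a reasonable starting point, but the conjecture remains exactly as open after your proposal as before.
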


\begin{figure}
\centering
\begin{tikzpicture}
\filldraw[fill=red!30!white, draw=black] (0,0) rectangle (3,0.1);
\filldraw[fill=blue!30!white, draw=black] (3,0) rectangle (3*1.5783,0.1);
\filldraw[fill=yellow, draw=black] (3*1.5783,0) rectangle (6,0.1);
\filldraw[fill=red, draw=black] (6,0) rectangle (16,0.1);
\draw (0,-0.1) -- (0,0); 
\node at (0,-0.3){$0$};
\draw (3,-0.1) -- (3,0); 
\node at (3,-0.3){$1$};
\draw (6,-0.1) -- (6,0); 
\node at (6,-0.3){$2$};
\draw (9,-0.1) -- (9,0); 
\node at (9,-0.3){$3$};
\node at (16,-0.3){$\rightarrow\infty$};
\node at (-0.8,0.05){$\kk_\alpha^A$};
\draw[thick] (16,-1+1.1) -- (16,-0.75+1.1); 
\filldraw[fill=green, draw=black] (16,-0.88+1.1) -- (16,-0.75+1.1) -- (16.24,-0.75+1.1) -- (16.17,-0.815+1.1) -- (16.24,-0.88+1.1) -- cycle;

\filldraw[fill=red, draw=black] (0,-1) rectangle (3,-1.1);
\filldraw[fill=blue, draw=black] (3,-1) rectangle (6,-1.1);
\filldraw[fill=red, draw=black] (6,-1) rectangle (9,-1.1);
\filldraw[fill=blue, draw=black] (9,-1) rectangle (16,-1.1);
\draw (0,-1.2) -- (0,-1.1); 
\node at (0,-1.4){$0$};
\draw (3,-1.2) -- (3,-1.1); 
\node at (3,-1.4){$1$};
\draw (6,-1.2) -- (6,-1.1); 
\node at (6,-1.4){$2$};
\draw (9,-1.2) -- (9,-1.1); 
\node at (9,-1.4){$3$};
\node at (16,-1.4){$\rightarrow\infty$};
\node at (-0.8,-1.15){$\kk_\alpha^H$};
\draw[thick] (6,-1) -- (6,-0.75); 
\filldraw[fill=green, draw=black] (6,-0.88) -- (6,-0.75) -- (6.24,-0.75) -- (6.17,-0.815) -- (6.24,-0.88) -- cycle;
\draw[thick] (9,-1) -- (9,-0.75); 
\filldraw[fill=green, draw=black] (9,-0.88) -- (9,-0.75) -- (9.24,-0.75) -- (9.17,-0.815) -- (9.24,-0.88) -- cycle;

\filldraw[fill=red!30!white, draw=black] (0,-2) rectangle (3*1.3863,-2.1);
\filldraw[fill=yellow, draw=black] (3*1.3863,-2) rectangle (6,-2.1);
\filldraw[fill=blue!30!white, draw=black] (6,-2) rectangle (16,-2.1);
\draw (0,-2.2) -- (0,-2.1); 
\node at (0,-2.4){$0$};
\draw (3,-2.2) -- (3,-2.1); 
\node at (3,-2.4){$1$};
\draw (6,-2.2) -- (6,-2.1); 
\node at (6,-2.4){$2$};
\draw (9,-2.2) -- (9,-2.1); 
\node at (9,-2.4){$3$};
\node at (16,-2.4){$\rightarrow\infty$};
\node at (-0.8,-2.05){$\hh_\alpha$};

\filldraw[fill=red, draw=black] (0,-3.1) rectangle (1,-3.2);
\node at (1.75,-3.15){convex};
\filldraw[fill=red!30!white, draw=black] (2.5,-3.1) rectangle (3.5,-3.2);
\node at (4.75,-3.15){(conj.) convex};
\filldraw[fill=yellow, draw=black] (6,-3.1) rectangle (7,-3.2);
\node at (7.75,-3.15){neither};
\filldraw[fill=blue!30!white, draw=black] (8.5,-3.1) rectangle (9.5,-3.2);
\node at (10.75,-3.15){(conj.) concave};
\filldraw[fill=blue, draw=black] (12,-3.1) rectangle (13,-3.2);
\node at (13.75,-3.15){concave};
\draw[thick] (14.6,-3.25) -- (14.6,-2.95); 
\filldraw[fill=green, draw=black] (14.6,-2.2-0.88) -- (14.6,-2.2-0.75) -- (14.6+0.24,-2.2-0.75) -- (14.6+0.17,-2.2-0.815) -- (14.6+0.24,-2.2-0.88) -- cycle;
\node at (15.5,-3.15){linear};
\end{tikzpicture}
\caption{\label{Fig:Results} 
Summary of, proven or conjectured, convexity and concavity properties of the three considered functions $\kk_\alpha^A$, $\kk_\alpha^H$ and $\hh_\alpha$ for different values of $\alpha$. For precise statements we refer to the main text. }
\end{figure}
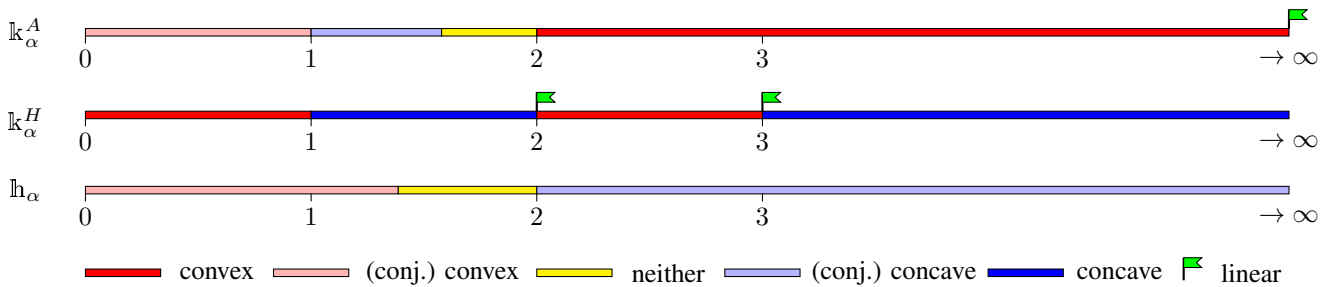

\section{Polarization under $H_{\alpha}^{J}$} 

Polar codes have gained a lot of attention recently as the first efficiently implementable codes that provably reach capacity. The underlying technique that enables Polar codes is that of polarization, allowing to transmit information with a rate equal to the capacity and vanishing error in the theoretical limit of infinite block length. In the original work by Arikan, polarization was proven for the Shannon entropy using a Martingale approach. Later, Alsan and Teletar~\cite{AT14} gave a conceptually simpler proof using information combining bounds, in particular the Mrs. Gerbers Lemma. Very recently, polarization was shown to also happen when considering the $H_{\alpha}^{J}$ R\'enyi entropy, again via a Martingale approach~\cite{zheng2019polarization}. As the main result of this section we show that our R\'enyi information combining bounds from the last section can be used to give a simpler and more intuitive proof following the technique established in~\cite{AT14}. 

Let $I(W)$ be the symmetric capacity of a channel as expressed by its (Shannon) mutual information. In its operational interpretation as the channel capacity, the mutual information is uniquely defined and it is known to exhibit polarization under Arikans channel transformation. However, it appears natural to ask whether polarization also occurs when considering a different quantity.

Let $W^+$ and $W^-$ denote the channels resulting from transforming two copies of the channel $W$. By carefully going through the simpler polarization proof in~\cite{AT14} one notices that essentially only two properties of the symmetric capacity are needed: 
\begin{align}
I(W^+) + I(W^-) &= 2 I(W) \label{chainrule}\\
\Delta(W) := \frac12 [ I(W^+) - I(W^-) ] &\geq \kappa(a,b), \label{growth}
\end{align}
where for the inequality $I(W)\in [a,b]$ and $\kappa(a,b)>0$ whenever $0<a<b<1$. 

Now lets consider an alternative quantity $I^*(W)$, one can easily generalize the main result in~\cite{AT14} to be formulated in the following way.
\begin{thm}
Let $I^*(\cdot)$ be a channel mutual information that fulfills the conditions in Equations~\ref{chainrule} and~\ref{growth}. For any binary input channel $W$, and any $0<a<b<1$, we have
\begin{align}
\lim_{n\rightarrow\infty} \frac{1}{2^n} \#\{ s^n\in \{+,-\}^n : I^*(W^{s^n})\in [0,a) \} &= 1 - I^*(W), \\
\lim_{n\rightarrow\infty} \frac{1}{2^n} \#\{ s^n\in \{+,-\}^n : I^*(W^{s^n})\in [a,b] \} &= 0, \\
\lim_{n\rightarrow\infty} \frac{1}{2^n} \#\{ s^n\in \{+,-\}^n : I^*(W^{s^n})\in (b,1] \} &= I^*(W), 
\end{align}
\end{thm}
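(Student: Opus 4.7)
The plan is to reproduce the Alsan--Telatar martingale argument from~\cite{AT14}, replacing the Shannon mutual information by the abstract $I^*$ and invoking Equations~\ref{chainrule} and~\ref{growth} only where the proof of~\cite{AT14} used the corresponding properties of $I$. First I would introduce the random process $I_n := I^*(W^{S^n})$, where $S_1, S_2, \ldots$ are i.i.d.\ uniform on $\{+,-\}$, $S^n = (S_1, \ldots, S_n)$, and $I_0 = I^*(W)$. Rewriting Equation~\ref{chainrule} as $\tfrac12 I^*(W^+) + \tfrac12 I^*(W^-) = I^*(W)$ gives $\mathbb{E}[I_{n+1} \mid S^n] = I_n$, so $\{I_n\}$ is a martingale for the natural filtration. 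Since $I^*$ takes values in $[0,1]$, Doob's martingale convergence theorem supplies a limit $I_\infty$ with $I_n \to I_\infty$ almost surely and in $L^2$.

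Next I would pin down the support of $I_\infty$ using Equation~\ref{growth}. Conditional on $S^n$, the increment satisfies $|I_{n+1} - I_n| = \Delta(W^{S^n})$, so almost-sure convergence of $I_n$ forces $\Delta(W^{S^n}) \to 0$ almost surely. Fix $0 < a < b < 1$: on the event $\{I_\infty \in [a,b]\}$, for all sufficiently large $n$ the value $I_n$ lies in a slight enlargement $[a',b'] \subset (0,1)$ of $[a,b]$, so Equation~\ref{growth} yields $\Delta(W^{S^n}) \geq \kappa(a',b') > 0$ for such $n$, contradicting $\Delta(W^{S^n}) \to 0$. Hence $\mathbb{P}(I_\infty \in [a,b]) = 0$, and taking a countable union over $[a_k, b_k]$ with $a_k \downarrow 0$ and $b_k \uparrow 1$ upgrades this to $I_\infty \in \{0,1\}$ almost surely. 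Bounded convergence then gives $\mathbb{E}[I_\infty] = \lim_n \mathbb{E}[I_n] = I^*(W)$, which combined with $I_\infty \in \{0,1\}$ determines $\mathbb{P}(I_\infty = 1) = I^*(W)$ and $\mathbb{P}(I_\infty = 0) = 1 - I^*(W)$.

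Finally, I would translate this to the counting statements. Because $S^n$ is uniform on $\{+,-\}^n$, the three fractions in the theorem are exactly $\mathbb{P}(I_n \in [0,a))$, $\mathbb{P}(I_n \in [a,b])$ and $\mathbb{P}(I_n \in (b,1])$. Almost-sure convergence of $I_n$ to an $\{0,1\}$-valued $I_\infty$, together with $a, b \in (0,1)$, implies that each of these probabilities converges to the probability of the corresponding event for $I_\infty$, yielding $1 - I^*(W)$, $0$ and $I^*(W)$ respectively. The main point of care in this plan is the step showing $\{I_\infty \in [a,b]\}$ is null: a naive invocation of Equation~\ref{growth} only controls $\Delta$ on steps where $I_n$ itself lies in $[a,b]$, and one has to use the a.s.\ convergence $I_n \to I_\infty$ to trap $I_n$ in a neighborhood of $[a,b]$ eventually, so that the gap lower bound $\kappa$ can actually be applied to contradict $\Delta(W^{S^n}) \to 0$.
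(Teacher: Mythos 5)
Your proof is correct, but it takes a genuinely different route from the paper. The paper's entire point in this section is to \emph{avoid} the martingale machinery: it proves the theorem by directly importing the elementary argument of Alsan and Telatar \cite[Theorem 1]{AT14}, which uses only the two hypotheses~\eqref{chainrule} and~\eqref{growth} in a counting/moment argument. Concretely, since $I^*(W^+)+I^*(W^-)=2I^*(W)$ implies $I^*(W^+)^2+I^*(W^-)^2=2I^*(W)^2+2\Delta(W)^2$, the average of $I^*{}^2$ over the $2^n$ synthesized channels is nondecreasing and gains at least $2\kappa(a,b)^2$ times the fraction of channels with $I^*\in[a,b]$ at each step; boundedness of that average forces the middle fraction to vanish, and conservation of the mean then pins down the two extreme fractions. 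Your argument instead reinstates Arikan's original approach: $I_n:=I^*(W^{S^n})$ is a bounded martingale by~\eqref{chainrule}, Doob gives an a.s.\ limit $I_\infty$, and~\eqref{growth} (applied, as you correctly note, to a slight enlargement $[a',b']$ in which $I_n$ is eventually trapped) forces $I_\infty\in\{0,1\}$ a.s., after which $\mathbb{E}[I_\infty]=I^*(W)$ and the portmanteau step give the three limits. Both proofs are valid and use exactly the same two hypotheses, so your theorem statement is proved; what the paper's route buys is (i) elementariness --- no appeal to martingale convergence --- and (ii) the extension to non-identical/non-stationary channel transformations mentioned in the paper's closing remarks, which falls out of the AT14 counting argument but is less immediate in the martingale formulation. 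One cosmetic point: $|I_{n+1}-I_n|$ equals $|\Delta(W^{S^n})|$ in general rather than $\Delta(W^{S^n})$; this is harmless since you only invoke the bound where~\eqref{growth} guarantees $\Delta>0$.
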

\begin{proof}
As described before, the proof follows immediately by following the proof of~\cite[Theorem 1]{AT14}.
\end{proof}
We are just left with showing that the above can be applied to a mutual information based on $H_{\alpha}^{J}$. Let us take $X$ and $Y$ as the random variables describing the input and the output of the channel $W$, respectively, and define 
\begin{align}
I_{\alpha}^{J}(W) :&= H_\alpha(X) - H_{\alpha}^{J}(X|Y) \\
&=  \frac{1}{1-\alpha}\left[\log\sum_{x,y} p(x,y)^\alpha - \log\sum_x p(x)^\alpha  - \log\sum_y p(y)^\alpha \right]. \nonumber
\end{align}
This is just one possible definition of an $\alpha$-mutual information but it turns out to be very convenient for our purpose. For a more general discussion on other definitions we refer to~\cite{verdu2015alpha}. 

Now, its easy to see that the definition above obeys the chain rule the same way $H_{\alpha}^{J}$ does and therefore a standard proof applies to show that $I_{\alpha}^{J}(W)$ obeys the condition in Equation~\ref{chainrule}. Additionally, it is rather easy to see that our results in the previous section ensure that $I_{\alpha}^{J}(W)$ also obeys the condition in Equation~\ref{growth} (compare e.g.~\cite[Lemma 1]{AT14}. Therefore it follows directly that $I_{\alpha}^{J}(W)$ exhibits polarization and, since $I_{\alpha}^{J}(W) = \log2 - H_{\alpha}^{J}(X|Y)$ for binary symmetric channels and flat $X$, so does $H_{\alpha}^{J}$, which is the main result of~\cite{zheng2019polarization}. 

We finish the section with two remarks. First, in~\cite{AT14} the authors also consider the case where the channel transformation is applied to non identical channels. We remark that following the same ideas as above the result can easily be generalized to hold for generalized $I^*(W)$ that fulfill the corresponding generalizations of Equations~\ref{chainrule} and~\ref{growth}. This holds in particular for $I_{\alpha}^{J}(W)$ and therefore also non-stationary channels polarize when considering this measure. 

Second, since in the Shannon information combining bounds the equality in the lower bound is achieved by the binary symmetric channel, the BSC seems to be the channel that is the most difficult to polarize. However, when considering $H_{\alpha}^{J}$, for some values of $\alpha$ the lower bound is satisfied by the binary erasure channel, suggesting that the roles of the channels are inverted when it comes to polarization.

%
%

\section{Conclusions} 
We have shown several extensions of information combining bounds to R\'enyi entropies. In general, the results require a convexity or concavity property, we have however been able to show such a property for several settings. Additionally, we discussed a simple application of our results to the polarization of R\'enyi entropies. 

Due to the number of possible definitions of conditional R\'enyi entropies we decided to focus on four of them. Primarily, they were chosen since all of them have found several applications in information theory and are therefore of practical relevance. Additionally, they all share the property that taking the limit $\alpha\rightarrow 1$ gives the Shannon entropy case. The latter can be useful as it gives an alternative approach towards proofing this special case. While this doesn't seem of particular interest in the traditional setting, recently the extension of information combining bounds to the realm of quantum information theory has been discussed~\cite{HR17}, where proving optimal bounds remains an open problem, even for the von Neumann entropy which is the natural generalization of the Shannon entropy to the quantum setting. Proving a conjectured lower bound would have immediate applications investigating Polar codes for classical-quantum channels~\cite{HR17, WG11, H14}. A potential strategy towards solving the quantum case, could be to look at quantum R\'enyi entropies and take the limit to the von Neumann entropy. In particular, $H_\alpha^A$ and $H_\alpha^H$ are special cases of the most commonly used quantum R\'enyi entropies~\cite{tomamichel2015quantum}. Our results can therefore be seen as partial progress towards a new approach to solve the quantum case. 

Finally, this work leaves some natural open problems. The most obvious one being to determine whether the desired convexity or concavity properties hold in those ranges of $\alpha$ where it is not yet known. Additionally, it remains open whether one can find different bounds that hold also where the convexity or concavity does not hold. One might also consider entropies other than the R\'enyi entropies. Particularly interesting ones would again be those that contain the Shannon entropy as a special case. Finally, considering different combining operations or non-binary $X$ would be an interesting goal for future research, compare e.g.~\cite{JA12, GV14, GB15, madiman2017entropy, madiman2019majorization}. 

\section*{Acknowledgments} 
 CH acknowledges financial support from the VILLUM FONDEN via the QMATH Centre of Excellence (Grant no. 10059).
 

\bibliographystyle{IEEEtran}
\bibliography{biblio}


\appendix
\subsection{Proof of Lemma~\ref{lem:kdown} and Lemma~\ref{lem:kH-linear}}\label{Appendix1}

Our goal is to investigate properties of the function
\begin{align}
\kk^H_{\alpha}(x) = k^H_{\alpha}({k_{\alpha}^H}^{-1}(x)\ast c),
\end{align}
in particular concavity, convexity or linearity in $x$. The results in Lemma~\ref{lem:kdown} and Lemma~\ref{lem:kH-linear} then follow by the symmetry of $\kk^H_\alpha(x,y)$ under exchanging $x$ and $y$. 

Recall that a function $f(x)$ is convex iff
\begin{align}
f(y) \geq f(x) + f'(x)(y-x) \label{convcond}
\end{align}
for all $x,y$. Concavity holds if above inequality holds as $\leq$ and the function is linear if it becomes an equality. 

It can easily be seen that 
\begin{align}
\partial_x \kk^H_{\alpha}(x) = \kk^H_{\alpha}(x) \frac{1-2c}{x}\frac{h'_\alpha({k^H_{\alpha}}^{-1}(x)\ast c)}{h'_\alpha({k^H_{\alpha}}^{-1}(x))}\,.
\end{align}
With some rewriting Equation~\eqref{convcond} becomes
\begin{align}
\frac{\kk^H_{\alpha}(y)}{\kk^H_{\alpha}(x)} - \frac{y}{x}(1-2c)\frac{h'_\alpha({k^H_{\alpha}}^{-1}(x)\ast c)}{h'_\alpha({k^H_{\alpha}}^{-1}(x))} \geq 1-(1-2c)\frac{h'_\alpha({k^H_{\alpha}}^{-1}(x)\ast c)}{h'_\alpha({k^H_{\alpha}}^{-1}(x))} \,.
\end{align}
Since this has to hold for all $x$ and $y$ in the parameter range we can do the re-parametrization $x\rightarrow k^H_{\alpha}(x)$ and $y\rightarrow k^H_{\alpha}(y)$, leading to
\begin{align}
\frac{k^H_{\alpha}(y\ast c)}{k^H_{\alpha}(x\ast c)} - \frac{k^H_{\alpha}(y)}{k^H_{\alpha}(x)}(1-2c)\frac{h'_\alpha(x\ast c)}{h'_\alpha(x)} \geq 1-(1-2c)\frac{h'_\alpha(x\ast c)}{h'_\alpha(x)} \,.
\end{align}
Let's define the function 
\begin{align}
g_{\alpha}^x(y) := \frac{k^H_{\alpha}(y\ast c)}{k^H_{\alpha}(x\ast c)} - \frac{k^H_{\alpha}(y)}{k^H_{\alpha}(x)}(1-2c)\frac{h'_\alpha(x\ast c)}{h'_\alpha(x)} \,.
\end{align}
Now it can be easily seen that the inequality reduces to
\begin{align}
g_{\alpha}^x(y) \geq g_{\alpha}^x(x) \,.
\end{align}
The original convexity/concavity problem is therefore reduced to whether the function $g_{\alpha}^x(y)$ has a minimum or maximum at $x$. 

To simplify the expression a bit more we consider the following simple calculation,
\begin{align}
h'_\alpha(x) = \frac{\alpha}{1-\alpha}\frac{x^{\alpha-1}-(1-x)^{\alpha-1}}{k_{\alpha}^H(x)}\,.
\end{align}
We find,
\begin{align}\label{Eqn:alphabeta}
g_{\alpha}^x(y) = \frac{1}{k_{\alpha}^H(x\ast c)} \left[ k_{\alpha}^H(y\ast c)  -  k_{\alpha}^H(y) (1-2c) \frac{(x\ast c)^{\alpha-1}-(1-x\ast c)^{\alpha-1}}{x^{\alpha-1}-(1-x)^{\alpha-1}}     \right]\,.
\end{align}
As a start, we consider two special cases that will turn out to have an interesting property: Consider $\alpha=2$ and $\alpha=3$. First observe that interestingly, 
\begin{align}
\frac{(x\ast c)^{\alpha-1}-(1-x\ast c)^{\alpha-1}}{x^{\alpha-1}-(1-x)^{\alpha-1}} \Bigr|_{\alpha=2} = \frac{(x\ast c)^{\alpha-1}-(1-x\ast c)^{\alpha-1}}{x^{\alpha-1}-(1-x)^{\alpha-1}} \Bigr|_{\alpha=3} = (1-2c)
\end{align}
and also
\begin{align}
k_{2}^H(y\ast c) &= (1-c)^2 + c^2 - 2y(1-y)(1-2c)^2 \\
k_{3}^H(y\ast c) &= 1-3c+3c^2 - 3y(1-y)(1-2c)^2 \,.
\end{align}
Therefore,
\begin{align}
g_{2,1}^x(y) &= \frac{1}{k_{2}^H(x\ast c)} \left[  (1-c)^2 + c^2   -  (1-2c)^2     \right]\,, \\
g_{3,1}^x(y) &= \frac{1}{k_{3}^H(x\ast c)} \left[   1-3c+3c^2  -  (1-2c)^2     \right]\,
\end{align}
which makes the function independent of $y$ in both cases. It follows that for these values of $\alpha$ and $\beta$ Equation~\eqref{convcond} holds even with equality and therefore $\kk$ is linear, which proves Lemma~\ref{lem:kH-linear}.  

Let's further investigate Equation~\ref{Eqn:alphabeta}, for which taking the derivative in $y$ gives:
\begin{align}
(g_{\alpha}^x)'(y) = \frac{\alpha (1-2c)}{k_{\alpha}^H(x\ast c)} \left[  \left[  y_c^{\alpha-1} - (1-y_c)^{\alpha-1}\right] - \frac{ x_c^{\alpha-1} - (1-x_c)^{\alpha-1}}{ x^{\alpha-1} - (1-x)^{\alpha-1}} \left[ y^{\alpha-1} - (1-y)^{\alpha-1} \right]  \right],
\end{align}
where we use $y_c := y\ast c$ and $x_c := x\ast c$. 

Obviously we have $(g_{\alpha}^x)'(x) = 0$ as needed. Further define $f(x):=x^{\alpha-1} - (1-x)^{\alpha-1}$ which gives
\begin{align}
(g_{\alpha}^x)'(y) = \frac{\alpha (1-2c)}{k_{\alpha}^H(x\ast c)} \left[  f(y_c) - \frac{ f(x_c)}{ f(x)} f(y) \right],
\end{align}

Note that $y_c \geq y$ for $0\leq y\leq 0.5$ and $y_c \leq y$ for $0.5\leq y\leq 1$. Since the prefactor is always positive and independent of $y$, Lemma~\ref{lem:kdown} follows from the properties of the function $f(x)$ and in particular its first and second derivative (note that these have factors $(\alpha-1)$ and $(\alpha-2)(\alpha-1)$ respectively and latter has exponent $\alpha-3$, which leads to the change between convexity and concavity at $\alpha=1,2,3$).  


\end{document}